\newtheorem{theorem}{Theorem}[section]
\newtheorem{corollary}[theorem]{Corollary}
\newtheorem{definition}[theorem]{Definition}
\newcommand{\Prob}{\operatorname{Pr}}
\newcommand{\indexf}{\mathbbm{1}}
\newcommand{\bin}{\operatorname{binary}}
\definecolor{mygreen}{HTML}{00FFD1}
\definecolor{myblue}{HTML}{31C6D4}
\definecolor{myyellow}{HTML}{FFFF00}
\definecolor{myred}{HTML}{FF1E1E}
\renewcommand{\mod}{\text{ mod }}
\DeclarePairedDelimiter\ceil{\lceil}{\rceil}
\newcommand{\lsb}{\operatorname{LSB}}
\newcommand{\ntext}[1]{#1}
\newcommand{\otext}[1]{}
\newcommand{\bsn}{\{0,1\}^n}
\newcommand{\Eval}{Eval}
\newcommand{\Con}{Con}
\newcommand{\opt}{opt_{\Con}}
\newcommand{\error}{\operatorname{error}}
\newcommand{\sizeS}{\left|S\right|}
\newcommand{\sizeh}{\left| h \right|}
\newcommand{\BC}{\operatorname{BC}}
\newcommand{\RSA}{\operatorname{RSA}}
\newcommand{\CRSA}{\operatorname{C-RSA}}
\newcommand{\powers}{\operatorname{powers}}
\newcommand{\bcinput}{\bin\left(\powers_N(\RSA(x,N,e)),N,e\right)}
\newcommand{\DFA}{\operatorname{DFA}}
\newcommand{\DFARSA}{\operatorname{DFA-RSA}}
\newcommand{\BF}{\operatorname{BF}}
\newcommand{\BFRSA}{\operatorname{BF-RSA}}
\newcommand{\LSTM}{\operatorname{LSTM}}
\newcommand{\LSTMRSA}{\operatorname{LSTM-RSA}}
\newcommand{\FC}{\operatorname{FC}}
\newcommand{\FCRSA}{\operatorname{FC-RSA}}
\newcommand{\ILP}{\operatorname{ILP}}
\newcommand{\ILPRSA}{\operatorname{ILP-RSA}}
\date{5.12.2022}
\begin{document}

\title{An in-principle super-polynomial quantum advantage for approximating\\ combinatorial optimization problems \ntext{via computational learning theory}}

\author{Niklas Pirnay}
\address{Electrical Engineering and Computer Science, Technische Universit{\"a}t Berlin,  10587 Berlin, 
Germany}
\author{Vincent Ulitzsch}
\address{Electrical Engineering and Computer Science, Technische Universit{\"a}t Berlin,  10587 Berlin, 
Germany}
\author{Frederik Wilde}
\affiliation{Dahlem Center for Complex Quantum Systems, Freie Universit\"{a}t Berlin, 14195 Berlin, Germany}
\author{Jens Eisert}
\affiliation{Dahlem Center for Complex Quantum Systems, Freie Universit\"{a}t Berlin, 14195 Berlin, Germany}
\affiliation{Fraunhofer Heinrich Hertz Institute, 10587 Berlin, Germany}
\author{Jean-Pierre Seifert}
\address{Electrical Engineering and Computer Science, Technische Universit{\"a}t Berlin, 10587 Berlin,
Germany}
\affiliation{Fraunhofer SIT, Rheinstra{\ss}e 75, 64295 Darmstadt, Germany}

\maketitle
{
\bf It is unclear to what extent quantum algorithms can outperform classical algorithms for problems of combinatorial optimization. In this work, by resorting to computational learning theory and cryptographic notions, we \ntext{give a fully constructive} proof that quantum computers feature a super-polynomial advantage over classical computers in approximating combinatorial optimization problems. Specifically, by building on seminal work by Kearns and Valiant, we \otext{identify} \ntext{provide} special instances that are hard for classical computers to approximate up to polynomial factors. Simultaneously, we give a quantum algorithm that can efficiently approximate the optimal solution within a polynomial factor. The quantum advantage in this work is ultimately borrowed from Shor's quantum algorithm for factoring. We introduce an explicit and comprehensive end-to-end construction for the advantage bearing instances. For such instances, quantum computers have, in principle, the power to approximate combinatorial optimization solutions beyond the reach of classical efficient algorithms.
}

\section{Introduction}

Recent years have enjoyed an enormous interest in quantum computing as a new paradigm of computing.
While ground breaking work \cite{shor_factoring_1994, montanaro_algorithms_2016, arute_googlesupremacy_2019} established that quantum computers provide a \otext{significant} \ntext{substantial} speedup for certain problems over classical computers, the extent of this \textit{quantum advantage} is still largely uncharted territory.
It has been suggested that quantum computers may actually assist in improving existing classical algorithms  for the task of \emph{combinatorial optimization}.
\ntext{That is, the task of assigning discrete values from a finite set to finitely-many variables, such that the cost function over the variables is minimal.}
\otext{To this date, no proof for a quantum advantage for this task has been given.}
Here, we provide a full constructive proof that quantum computers can indeed outperform classical computers for finding approximations to combinatorial optimization problems.

Combinatorial optimization problems arise in a wealth of contexts,
ranging from problems in the description of nature to industrial resource optimization
\cite{cook_optimization_1997}. In combinatorial optimization problems, one is given an objective function which needs to be optimized over a finite set of object, such that some constraints over the objects are satisfied.
A prominent example is the \emph{travelling salesperson problem}, 
in which one has to choose a cyclic route through a set of cities, such that the length of the route is minimal (see Fig.~1\textbf{A}).
In this case, the objective function is the sum of travelled distances along the route, which needs to be minimized. The objects are the cities and the constraints demand that the start- and endpoints are the same city and no city is visited twice. 
The travelling salesperson problem underlies many routing problems that we encounter in our every-day life, such as finding the most efficient supply chain, the cheapest delivery route or the fastest 3D print.
But also job scheduling, resource allocation, or portfolio optimization---and many naturally occurring problems such as that of protein folding---can basically be seen as combinatorial optimization problems.
Given the vast social and economic significance of combinatorial optimization problems, it is not surprising that they have been a subject of intense research for many decades.
However, many problems of this kind are known to be NP-hard in worst case complexity, i.e., even the best algorithms to date cannot solve all instances of combinatorial optimization problems in tractable time.
This does not mean that one cannot solve practically relevant instances up to reasonable system sizes or find good approximations to the optimal solutions.
There is indeed a rich body of literature on both heuristic approaches that work well in practice \cite{hromkovic_heuristics_2004}
as well as on a rigorous theory of approximating solutions \cite{williamson_design_of_approximation_algorithms_2011}.
For example, enormous traveling salesperson instances of up to 85.900 cities have been solved optimally \cite{applegate_tsp_2009} and there are many software suites that enable good approximations for the industry today.

Motivated by the insight that \emph{quantum computers} may offer substantial computational speedups over classical computers \cite{shor_factoring_1994, montanaro_algorithms_2016},
it has long been suggested that quantum computers 
may actually assist in further improving approximations to such problems.
While there is no hope for an efficient either quantum or classical 
algorithm that is \emph{guaranteed to find the optimal solution}, \otext{the} \ntext{a crucially important} question is whether quantum computers offer an advantage for combinatorial optimization problems and specifically for \textit{approximating} the solution of such problems.

This topic is particularly prominently discussed in the realm of \emph{near-term quantum computers}  \cite{preskill_quantum_2018}, for which full
quantum error correction and fault tolerance seem 
out of scope, but which may well offer
computational advantages over classical computers
\cite{arute_googlesupremacy_2019,hangleiter_supremacyreview_2022}. 
Indeed, for such devices, algorithms such as the \emph{quantum approximate optimization algorithm} \cite{farhi_qaoa_2014}
have been designed precisely to solve combinatorial optimization problems of the above mentioned kind.
Surely these instances of variational algorithms 
\cite{cerezo_variational_2021, mcclean_variational_2016,zhou_qaoaperformance_2020} 
will not always be able to solve such problems: 
At best, these algorithms may be able to produce approximate solutions
that are better than those found by classical computers.
They may also be able to efficiently find good approximations for more instances
than classical computers when they are used perfectly.
When actually operated in realistic, noisy environments, the performance of quantum devices is further reduced. 
Indeed, for variational algorithms run on noisy devices, some obstacles have 
been identified for quantum computers that involve circuits that are deeper 
than logarithmic \cite{stilck_noisyoptimization_2020, ryuji_errormitigation_2022, quek_errormitigation_2022, takagi_errormitigation_2022},
obstacles that may well be read as indications that it will be
challenging to achieve quantum advantages 
in the presence of realistic noise levels.

\begin{figure*}
    \centering

\includegraphics[width=.9\textwidth]{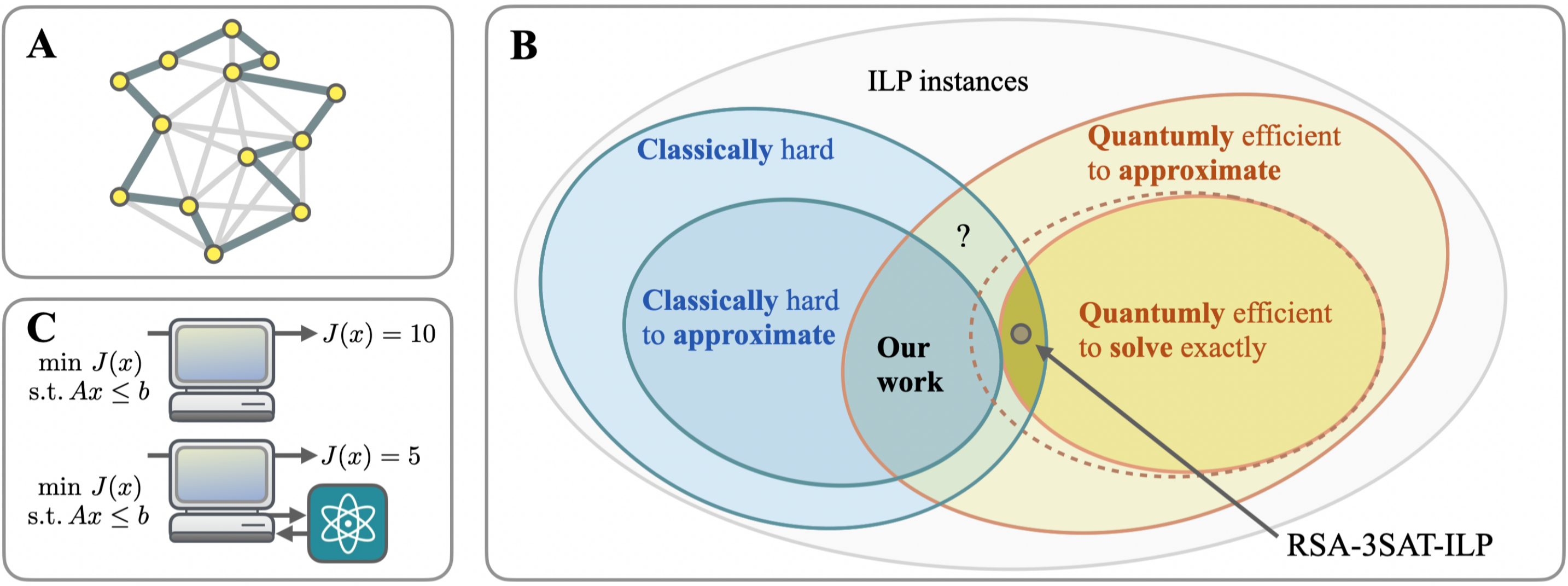}\\
    \caption{ \ntext{\textbf{Overview of the setting of our work.}} \textbf{A} A diagrammatic sketch of the travelling salesperson
    problem aimed at finding the
    shortest possible route that visits each city (represented
    as vertices)
    exactly once and returns to the origin city.
    \textbf{B} Venn diagram that depicts the sense in which a quantum advantage---symbolized in \textbf{C}---is proven in our work for integer programming problems. The grey set contains all instances of integer linear programming, and the subsets contain the hard or respectively easy to solve instances. By hard to approximate we mean that there is no polynomial time algorithm that approximates the size of the optimal solution up to a factor of $opt^{\alpha} \cdot |I|^{\beta}$, where $|I|$ is the instance size, $\alpha, \beta$ are constants such that $\alpha \geq 0$, $0 \leq \beta < 1$ and $opt$ denotes the size of the optimal solution.
    Whether the dotted line holds true, i.e., whether there exists a problem that can be solved \textit{exactly} by a polynomial-time quantum algorithm, but are hard to approximate classically, is left for further research.
    }
    \label{fig:Venn}
\end{figure*}

For variational algorithms aimed at tackling classical 
combinatorial optimization problems
that are being cast in the 
form of minimizing the
energy of commuting Hamiltonian terms, 
further obstructions are known
\cite{garcia_errorpropagation_2022}. Some small instances of the
problem can even be classically efficiently simulated (even
though small noise levels may help
\cite{liu_smallnoise_2022}).

The make-or-break question, therefore, is: 
What is, after all, the potential of quantum computers for tackling combinatorial optimization problems?
A simple quantum advantage for exactly solving combinatorial optimization problems may be obtained by reducing the integer factoring problem to 3-SAT and leveraging the advantage of Shor's algorithm \cite{shor_factoring_1994}.
\otext{A quantum advantage for \textit{approximating} the solution of combinatorial optimization problems seems much more difficult to obtain.}
\ntext{A quantum advantage for \textit{approximating} the solution of combinatorial optimization problems can also be obtained using a different proof technique than used in this manuscript.
As outlined in Ref.\ \cite{szegedy_optimization_2022}, the celebrated PCP theorem can be used to show classical approximation hardness, while Shor's algorithm for factoring can be used for an efficient quantum approximation algorithm.
Thus, an in-principle separation between classical and quantum approximation algorithms can already be obtained from the PCP machinery and Shor's algorithm.
However, the focus of this work is to provide a technically detailed and complete proof, that comprehensively describes the reductions and gives and end-to-end guidelines on how to construct the advantage-bearing combinatorial optimization instances.
We expect that, the concrete realization of the proof gives follow-up work additional insights over a generic proof sketch.}
\otext{Given the practical importance of such a task and its wide applicability, this poses a highly relevant question.}
\ntext{Given the practical importance of combinatorial optimization tasks and its wide applicability, this is a valuable contribution to further advance quantum optimization algorithms.}

\section{Results}

\subsection{Premise of this work}

\otext{In this work, we provide a comprehensive answer to the affirmative: A fault tolerant quantum computer can approximate certain combinatorial optimization problems super-polynomially more efficiently than a classical computer.}
\ntext{In this work, we provide a comprehensive proof that a fault tolerant quantum computer can approximate certain combinatorial optimization problems super-polynomially more efficiently than a classical computer.  While such a result can 
also be obtained from the PCP theorem and Shor's algorithm \cite{szegedy_optimization_2022}, our 
work focuses on fully fleshing out a constructive proof, in order to provide a clear guideline on how such advantage-bearing instances can be constructed.
An important contribution of our work -- in particular in the light of claims of applications of quantum computers for solving optimization problems that have become common -- is also in contributing to clarifying in what precise sense one can hope for quantum advantages in optimization in the first place.}

In our efforts, we \ntext{digress from the PCP theorem and} build on the work of Ref.\ \cite{kearns_boolfunc_1993}, who have shown the classical hardness of approximating the solution of the so-called \emph{formula colouring problem}, a combinatorial optimization problem which generalizes the \emph{graph colouring problem}.
We continue to draw inspiration from Ref.~\cite{kearns_boolfunc_1993} when showing an approximation hardness preserving reduction from the formula colouring problem to \emph{integer programming} (a family of combinatorial optimization problems
on which variants of quantum approximation have already been applied to \cite{deller_quiditqaoa_2022}).
To prove the super-polynomial quantum advantage, we extend the work of Ref.~\cite{kearns_boolfunc_1993} to show the classical approximation hardness for certain integer programming instances that are constructed from the \emph{RSA encryption function}. We then provide an efficient quantum algorithm for approximating the solutions of those instances up to a polynomial factor. \ntext{For a given instance $\mathcal{I}$ of integer programming or formula colouring, it can be decided in quantum polynomial time whether $\mathcal{I}$ belongs to this set of advantage bearing instances.}

We also formulate the hard-to-approximate instances in the optimization problem of minimizing the energy of commuting Hamiltonian terms, connecting our findings to the widely studied field of variational quantum optimization.
Since the classical approximation hardness stems from the hardness of inverting the 
\emph{RSA encryption function} \cite{kearns_boolfunc_1993}, the core of the quantum advantage discovered in this work is ultimately essentially borrowed from, once again, 
Shor's quantum algorithm \cite{shor_factoring_1994} for factoring.

The kind of reasoning developed here resembles the mindset of Refs.~\cite{sweke_generator_2021,pirnay_density_2022,liu_supervised_2021}
to the problem of approximating solutions to combinatorial optimization.
The argument we have put forth 
compellingly shows that quantum computers  
can indeed perform provably substantially better than classical computers on instances of approximating
combinatorial optimization problems, in fact, featuring a super-polynomial speedup. 
To make contact with quantum approximate optimization, we also spell out how the
problem instances can be written in terms \ntext{of} Hamiltonian optimization.
While the results found here are highly motivating and do show the potential of
quantum devices to tackle such practically relevant problems, it remains open to which extent
this potential can be unlocked for short variational quantum circuits as they
are accessible in near-term quantum computers.

This  result is interesting due to the technical aspects in its own right---showcasing the potential of quantum computers to offer speedups when tackling combinatorial optimization problems. It is also interesting conceptually, because it provides guidance on the question what type of speedups one can expect from further quantum approximation algorithms.
The present work does \emph{not} suggest to solve NP-hard problems
exactly on a quantum computer in polynomial time.
Instead, \otext{we show} \ntext{we provide a full proof for} an in-principle quantum advantage for classically hard-to-approximate combinatorial optimization problems \ntext{and along the way introduce a polynomial reduction strategy.}
This can be seen as a positive result on the potential
use of fault tolerant quantum computers and, possibly,
variational quantum algorithms to address such problems.

\subsection{Technical results}
Technically, in this work, we show a quantum-classical separation for the computational task of approximating combinatorial optimization problems.
To show this, one needs a set of combinatorial optimization problem instances that are classically hard-to-approximate but for which we provide an efficient quantum approximation algorithm. For the classically hard-to-approximate problem instances, we build on the work of Ref.~\cite{kearns_boolfunc_1993}, who have shown the classical hardness of approximating the solution of the so-called \emph{formula colouring problem}, a combinatorial optimization problem which generalizes the \emph{graph colouring problem}.
Before we proceed with the quantum efficiency part, we want to briefly explain the formula colouring problem and how classical approximation hardness for specific instances can be obtained.

The formula colouring problem is defined over a formula $F$ with the integer variables $z_1, \dots, z_m \in \mathbb{N}$. The value of a variable acts as the \textit{colour} of the variable.
A \textit{$k$-colouring} is an assignment of colours to the $z_i$, described by a partitioning $P$ of the variable set into $k$ equivalence classes, such that two variables are in the same partition if and only if they are assigned the same colour.
We write $z_i = z_j$ if and only if the two variables are assigned the same colour, and hence they are in the same partition in $P$.
Otherwise, we write $z_i \neq z_j$.
Let us now give a formal definition of the formula colouring problem.
\begin{definition}[Formula colouring problem $\FC$ \citep{kearns_boolfunc_1993}]
   \label{def:fc}
    \textbf{Instance} A Boolean formula $F(z_1, \dots, z_m)$ which consists of conjunctions of clauses of the form either $(z_i \neq z_j)$ or the form $((z_i = z_j) \rightarrow (z_k = z_l))$.\\
    \textbf{Solution} A minimal colouring $P$ for $F(z_1, \dots, z_m)$ such that $F$ is satisfied.
\end{definition}
A \textit{minimal colouring} to the FC problem is a colouring with the fewest colours, i.e., $|P|$ is minimal for all possible colourings such that $F$ is satisfied.
To internalize, consider the example formula
\begin{equation}
    (z_1 \neq z_2) \land ((z_1 = z_3) \rightarrow  (z_2 = z_4))
\end{equation}
which has the 4-colouring $\{\{z_1\}, \{z_2\}, \{z_3\}, \{z_4\}\}$ satisfying the formula and has the minimal colouring $\{\{z_1, z_3\}, \{z_2, z_4\}\}$ using only two colours while satisfying the formula.
It can be easily seen, that one can encode the graph colouring problem into the formula colouring problem by constructing a formula that only consists of clauses $(z_i \neq z_j)$ for each edge in the graph between nodes $z_i$ and $z_j$.
Thus the formula colouring problem belongs to the computationally hard-to-solve class of NP-complete problems.
In this work, we show a quantum advantage for a specific subset of formula colouring problems, that are provably hard to even approximate, but for which we present an efficient quantum approximation algorithm.
Further we give an approximation-preserving reduction from the formula colouring problem to the \textit{integer linear programming} (ILP) problem, thus showing also a quantum advantage for integer programming.
\begin{definition}[Integer linear programming problem ($\ILP$)]
    \textbf{Instance} A \ntext{linear} objective function $J$ over integer variables subject to \ntext{linear} constraints of the variables.\\
    \textbf{Solution} A valid assignment $\mathcal{A}$ of the variables under the constraints, such that the objective function $J(\mathcal{A})$ is minimal for all assignments that satisfy the constraints.
\end{definition}

So what is this subset of classically hard-to-approximate FC/ILP problem instances?
Ref.~\cite{kearns_boolfunc_1993} show how one can cleverly encode the \emph{deterministic finite automaton} (DFA) that decrypts an RSA-ciphertext into the formula colouring problem. That is to say, they show how to construct a set formula colouring problem instances $\FCRSA$, where if one would be able to find the smallest (or even approximately small) colouring, then one would be able to learn a DFA that could decrypt RSA ciphertexts.
Since decrypting RSA ciphertexts is assumed to be intractable for classical computers, when the secret key is unknown, it follows that approximating the solutions to $\FCRSA$ must be intractable.
\ntext{In this work, we substantially extend this result to ILP problems by defining a subset $\ILPRSA$ by means of a polynomial, approximation-preserving reduction of $\FCRSA$ to ILP.}
\otext{The same also applies for the problem instance set $\ILPRSA$, which is derived from $\FCRSA$, by reducing the formula colouring problem instances to integer linear programming.}
For the detailed description on how the hard-to-approximate instances are constructed and an in-depth explanation of why they are hard-to-approximate, we refer the reader to the methods sections.
\ntext{Specifically, Section~\ref{sec:quantum_efficiency} presents an overview of the chain of reductions and further hardness results derived in Ref.~\cite{kearns_boolfunc_1993}.}
%
\otext{Before we state the main results in this work, let us give the formal definition of integer linear programming problems.}

Building on the machinery developed in  Ref.~\cite{kearns_boolfunc_1993}, we prove the classical approximation hardness for the combinatorial optimization task of integer programming, i.e., for the specific subset of problem instances called $\ILPRSA$. 
As described before the instances in $\ILPRSA$ cleverly encode the decryption of an RSA ciphertext, for which the secret cryptographic key is unknown.
Hence, we obtain the following theorem, which must hold if inverting the RSA encrpytion function is computationally intractable for classical algorithms.

\begin{theorem}[Classical hardness of approximation for \textit{integer linear programming}]
   \label{theo:classical_hardness_ilprsa_maintext}
    Assuming the hardness of inverting the RSA function,
    there exists no classical probabilistic polynomial-time algorithm
    that on input an instance $\ILP_F$ of $\ILPRSA$ finds an assignment $\mathcal{A}$ of the variables in $\ILP_F$ which satisfies all constraints and approximates the optimal objective value $opt_{\ILP}(\ILP_F)$ by
    \begin{equation}
        J(\mathcal{A}) \leq opt_{\ILP} (\ILP_F)^{\alpha} |\ILP_F|^{\beta}
    \end{equation}
    for any $\alpha \geq 1$ and $0 \leq \beta < 1/4$.
\end{theorem}
The quantity $opt_{\ILP}(\ILP_F)$ is the minimal objective function value possible under the constraints in $\ILP_F$ and $|\ILP_F|$ denotes the size of the problem instance in some fixed encoding.
The Theorem above essentially states that there is no classical algorithm that finds an assignment $\mathcal{A}$ such that the objective value $J(\mathcal{A})$ is upper bounded by some polynomial in $opt_{\ILP}(\ILP_F)$ times a pre-factor that is determined by the size of the problem.
That is under the assumption that inverting the RSA function is not possible in polynomial time on a classical computer.

However, we show that there does exist a polynomial-time quantum algorithm that finds an assignment $\mathcal{A}$ of the variables that satisfies the constraints in $\ILP_F$ such that the objective value is smaller than some polynomial in $opt_{\ILP}(\ILP_F)$.
\begin{theorem}[Quantum efficiency for $\ILPRSA$]
    There exists a polynomial-time quantum algorithm that, on input an instance $\ILP_{F}$ of $\ILPRSA$, finds a variable assignment
    $\mathcal{A}$ that satisfies all constraints and for which the objective function is bounded as
    $$J(\mathcal{A}) \leq opt_{\ILP}(\ILP_{F_S})^{\alpha}$$
    for all $\ILP_{F}$ and for some $\alpha \geq 1$.
\end{theorem}
Essentially, the efficient quantum algorithm cleverly reads out the RSA parameters from an instance $\ILP_F$ of $\ILPRSA$ and then runs Shor's algorithm for integer factorization, thereby reconstructing the secret RSA key. Given the RSA secret key, the algorithm can find an assignment of the variables in $\ILP_F$ such that the objective function is a polynomial in $opt_{\ILP}(\ILP_{F_S})$.
This yields the sought after super-polynomial quantum advantage for approximating the optimal solution of combinatorial optimization problems.
The nature of this advantage is illustrated in Figs.~\ref{fig:Venn}(b) and (c).
\ntext{Note, that the factor $\vert\ILP_F\vert^\beta$ in the hardness result (Theorem~\ref{theo:classical_hardness_ilprsa_maintext}) cannot decrease the approximation gap, since $\vert\ILP_F\vert^\beta \geq 1$ for all $\beta \in [0, 1/4)$.}

The quantum algorithm presented is distinctly not of a variational type, as they are 
commonly proposed for approximating combinatorial optimization tasks using a quantum computer \cite{farhi_qaoa_2014}.
It is still meaningful to formulate the optimization problem as an energy minimization 
problem, to closely connect our findings to the performance of variational quantum algorithms \cite{cerezo_variational_2021,mcclean_variational_2016} in near-term quantum computing.
In the methods section we give the construction on how the ILP at hand can be stated in terms of a \emph{quadratic unconstrained binary optimization problems}.
All such problems can be directly mapped to Hamiltonian problems where 
the optimal objective value 
is equivalent with the ground state energy of the 
\emph{quantum Ising Hamiltonian}.

\section{Discussion}

In this work, we have made substantial progress on the important question of what potential quantum computers may
offer for approximating the solution of combinatorial optimization problems. Given the social and economic impact of such problems and the large body of the recent literature on near-term
quantum computing focusing on use cases of this kind, this is an important question.

We actually address this question from a fresh and unorthodox perspective. Equipped with tools from  mathematical cryptography, 
\ntext{and materializing the Occam's Razor framework in the
reduction -- hence settling an open question 
-- we technically present here,}
we prove a super-polynomial speedup for approximating the solution of instances
of NP-hard combinatorial optimization problems using a fault tolerant quantum computer.
We explicitly show such speedups for instances of the much discussed integer linear programming 
which are proven to be hard to approximate by classical computations.

\otext{
Here, we answer the question of the potential of quantum computers to achieve substantial 
speed-ups for combinatorial optimization problems to the affirmative. In particular, we show that quantum computers can efficiently approximate combinatorial optimization problems that cannot be efficiently approximated by classical computations up to the same accuracy. This can be seen as an 
encouraging result for a family of quantum algorithms that could derive new applications from Shor's algorithm. It goes without saying that our findings cannot be used to make strong
claims of a similar potential for variational quantum computations: Here, hard 
methods development is still required to come to such conclusions.
}

\ntext{In this work, we provide the end-to-end construction of the advantage bearing instances, 
allowing further work to gain valuable insights into the quantum advantage for combinatorial optimization.
Such instances are expected to prove to be useful to compare quantum versus classical optimization algorithms and provide a fruitful arena for future research in this field.}
\otext{It will also be interesting to fully flesh out follow-up proof sketches that come to similar conclusions based on the \emph{PCP theorem} \cite{szegedy_optimization_2022} and compare the obtained results.}
The work here shows \ntext{and provides guidance for the discussion of} what one can reasonably hope for when discussing the potential of near-term quantum algorithms to tackle problems of combinatorial optimization.

{\small 
\onecolumngrid
\vspace{\columnsep}

\section{Materials and methods}

\subsection{Preliminaries}

\subsubsection{Notation \ntext{and acronyms}}
For what follows, some notation will be required. We will heavily build on literature from the cryptographic context, and hence make use of substantial notation that is common in this context.
By $\{0,1\}^n$ we will denote the
set of $n$-bit strings, whereas $\{0,1\}^*$ are arbitrary finite length bit strings.
$2^X$ is the power set of $X$, for $X$ being a set. $\indexf(a)$ is the 
\emph{indicator function} which equates to $1$ if $a$ is true and $0$ otherwise.
$LSB(x)$ is the least significant bit of $x$.
$\mathbb{Z}_N$ is the residue class ring $\mathbb{Z}/N\mathbb{Z}$.
The application of the function $\bin(x_1,\dots,x_k)$ explicitly converts 
its inputs $x_1,\dots,x_k$ to a single coherent bit string using some fixed binary encoding.
\ntext{The result established in this work is based on a series of reductions between various classes of computational problems and brings them together in a fresh fashion.
While each of the terms is introduced explicitly in the subsequent sections, we summarize them here in a table for the reader's convenience.}

\begin{table}[H]
    \centering
    \begin{tabular}{r|l l}
         \textsc{acronym} & \textsc{meaning} & \textsc{section/definition} \\
         \hline
         \textit{Eval} & Evaluation problem & Def.~\ref{def:eval} \\
         $\Con$ & Consistency problem & Def.~\ref{def:conprob} \\
         $\opt(S)$ & Size of the minimal consistent representation class & \ref{subsec:learning_representations} \\
         A.P.R & approximation-preserving reduction & \ref{subsec:reductions} \\
         $\RSA$ & Rivest–Shamir–Adleman asymmetric cryptosystem & \ref{subsec:rsa} \\
         $\lsb$ & Least significant bit & \ref{subsec:rsa} \\
         $\BC$ & Poly-size log-depth Boolean circuits & \ref{subsec:reductions} \\
         $\CRSA$ & Boolean circuits inverting RSA & Def.~\ref{definition:rsa_learning_problem} \\
         $\DFA$ & Class of deterministic finite automata & \ref{subsubsec:dfa} \\
         $\DFARSA$ & Subclass of $\DFA$s computing the $\lsb$ of $\RSA$ & \ref{subsubsec:classical-approximation-hardness} \\
         $\FC$ & Formula coloring problems & Def.~\ref{def:fc} \\
         $\FCRSA$ & Subclass of $\FC$ encoding the solution to $\Con(\DFARSA, \DFA)$ & \ref{subseq:classhard_fc} \\
         $\BF$ & Boolean formulas & \ref{subsec:reductions} \\
         $\BFRSA$ & Subclass of $\BF$ computing the $\lsb$ of $\RSA$ & \ref{subsubsec:classical-approximation-hardness} \\
         $\LSTM$ & Log-space Turing machine & \ref{subseq:classhard_fc} \\
         $\LSTMRSA$ & subclass of $\LSTM$ computing the $\lsb$ of $\RSA$ & \ref{subsubsec:classical-approximation-hardness} \\
         $\ILP$ & Integer linear programs & Def.~\ref{def:ilp} \\
         $\ILPRSA$ & Subclass of $\ILP$ encoding the solution to $\FCRSA$ & \ref{subsubsec:classical-approximation-hardness} \\
    \end{tabular}
    \caption{An overview of notation and acronyms used in this manuscript.}
    \label{tab:acronyms}
\end{table}

\subsubsection{\ntext{Deterministic finite automata}}
\label{subsubsec:dfa}
\emph{Deterministic finite automata} (DFA) \cite{hopcroft_automata_2013} are models in computation theory, utilized for modeling systems with a finite number of states.
A DFA is formally defined as a quintuple $(Q, \Sigma, \lambda, q_0, \omega)$, where
\begin{itemize}
    \item $Q$ is a finite set of states.
    \item $\Sigma$ is a finite set of symbols, constituting the automaton's alphabet.
    \item $\lambda: Q \times \Sigma \rightarrow Q$ is the transition function.
    \item $q_0 \in Q$ represents the start state.
    \item $\omega \subseteq Q$ denotes the set of accept states.
\end{itemize}

The DFA operates on a string composed of symbols from $\Sigma$. Beginning from the start state $q_0$, it transitions between states according to the transition function $\lambda$. Upon processing the entire string, if the DFA is in a state that is part of $\omega$, the string accepted by the DFA; otherwise, it is rejected.
Figure \ref{fig:dfa} presents a graphical illustration of an exemplary DFA.
\begin{figure}[h]
\centering
\begin{tikzpicture}[shorten >=1pt,node distance=2cm,on grid,auto] 
   \node[state,initial] (q_0)   {$q_0$}; 
   \node[state] (q_1) [above right=of q_0] {$q_1$}; 
   \node[state,accepting] (q_2) [below right=of q_0] {$q_2$};
    \path[->] 
    (q_0) edge  node {a} (q_1)
          edge  node [swap] {b} (q_2)
    (q_1) edge  node  {a} (q_2)
          edge [loop above] node {b} ()
    (q_2) edge  node [swap] {a,b} (q_0);
\end{tikzpicture}
\caption{\textbf{Example of a deterministic finite automaton.} The DFA is represented as a quintuple $(Q, \Sigma, \lambda, q_0, \omega)$, where $Q = \{q_0, q_1, q_2\}$, $\Sigma = \{a, b\}$, $\lambda$ is defined by the transitions (e.g., $\lambda(q_0, a) = q_1$, $\lambda(q_0, b) = q_2$, etc.), $q_0$ is the initial state, and $\omega = \{q_2\}$ is the set of accept states.}
\label{fig:dfa}
\end{figure}
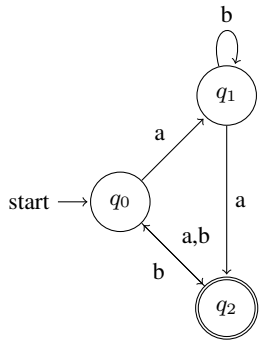
It is known that DFAs recognize exactly the set of regular languages~\cite{hopcroft_automata_2013}.

\subsubsection{Representation classes}
To show a quantum-classical separation for a computational task, one needs a classically hard problem.
Many computational tasks that are hard for classical computers may be derived from cryptography, where it can be shown that under \textit{cryptographic assumptions} (such as ``factoring is hard") learning certain concepts or properties about a specific cryptographic function is hard.
In particular, in this work we are concerned with how these concepts are represented and how large these representations are.
To do this, let us introduce the notion of representation classes 
that capture the model of concepts in a precise manner.
Let $X \subseteq \{0,1\}^*$ be a set of binary strings with finite length, called the \textit{domain} which encodes all objects of interest to us.
For example, $X$ may be the set of all images or $X$ may be the set of all music songs.
A \textit{concept} over $X$ is described by a subset of $X$ which is defined via $\{x\in X \mid \text{concept is true for } x \}$.
A concept may be for example ``depicts a tree" or ``is a happy song".

We are in particular interested in how a concept is represented. Different representations for a concept can, for example, be \emph{Boolean circuits}, \emph{Boolean formulae}, \emph{Turing machines} 
or \emph{deterministic finite automata} (DFA) \cite{hopcroft_automata_2013}.
We, therefore, define a \textit{representation class over $X$} to be the pair $(\sigma, C)$, where $C \subseteq \{0,1\}^*$ is the set of representation descriptions, for example the set of descriptions for Boolean circuits or finite automata. The function 
$\sigma : C \rightarrow 2^X$ maps a representation description to a concept. For example, 
$\sigma$ maps a DFA to the set of bit strings that it accepts or a 
Boolean formula to its satisfying assignments. We will sometimes denote $(\sigma, C)$ simply by $C$ if $\sigma$ is clear from the context.
Fig.~\ref{fig:representations} visualizes the relationship between representations and concepts.

Observe that for all $c \in C$, $\sigma(c)$ is a concept over X and the entire image space $\sigma(C)$ is called the \textit{concept class} represented by the representation class $(\sigma, C)$. We denote by $|c|$ the length of the representation description using some standard encoding. Additionally, for a representation $c \in C$, we denote by $c(x) = \indexf(\text{if } x \in \sigma(c))$ the \textit{label} of x under the concept $\sigma(c)$, with the index function $\indexf$.
Furthermore, a \textit{labeled sample} 
\begin{equation}
S=\{\left(x, c(x)\right) \mid x \in \tilde{X} \subseteq X \}
\end{equation}
of a concept $\sigma(c) $
is a set of \textit{labeled examples} from a subset $\tilde{X}$ of the domain $X$. Note that a sample consists of multiple examples.
Finally, let $(\phi, H)$ be another representation class over $X$ and let $D$ be a probability distribution over $X$. For any $h \in H$, we define the error of $h$ under $D$ with respect to a target representation $c$ as 
\begin{equation}
\error_{c,D}(h) = \Prob_{x \sim D}[c(x) \neq h(x)].
\end{equation}

\subsubsection{Polynomial-time reductions}
\label{subsec:reductions}
Polynomial-time reductions are an important building block of this work, as they will be an integral part of our proof of a quantum-classical computational separation for combinatorial optimization problems.
Building on the work of
Ref.~\cite{kearns_learningtheory_1994}, reductions are required to ``carry over" classical hardness results of representation learning to combinatorial optimization problems.
At the same time, we find that quantum computers can break the 
construction and lead to a quantum advantage for combinatorial optimization.
Let us now introduce the notion of general polynomial-time reductions among computational problems.

Let $A,B$ be two computational problems.
Consider the function $\tau$ to map an instance $\mathcal{A}$ of $A$ to an instance $\tau(\mathcal{A})$ of $B$.
Furthermore, let $g$ be a function that maps from the solution space of $B$ to the solution space of $A$.
The pair of functions $(\tau,g)$ is a \textit{polynomial-time reduction} from $A$ to $B$, if $\tau,g$ are computable in polynomial time and if $y_{\mathcal{B}}$ is a solution of $\tau(\mathcal{A})$ if and only if $g(y_{\mathcal{B}})$ is a solution of $A$. 

Note that while $\tau$ maps instances from $A$ to $B$, $g$ works in the backwards direction, mapping solutions of $B$ to $A$. This will be important for reductions between combinatorial optimization problems.
In some cases, where $g$ is the identity, we call the reduction simply by the instance transformation $\tau$.
Further we denote by $A \leq_p B$ (``$A$ polynomial-time reduces to $B$"), if there exists a polynomial-time reduction from $A$ to $B$.
It is important to note that since the run time of $\tau$ and $g$ are at most polynomial in their inputs, the outputs can be larger than the inputs at most by a factor of ${\rm poly}(|\mathcal{A}|)$, ${\rm poly}(|y_{\mathcal{B}}|)$, respectively.

\begin{figure*}
    \centering
    \includegraphics[width=.6\textwidth]{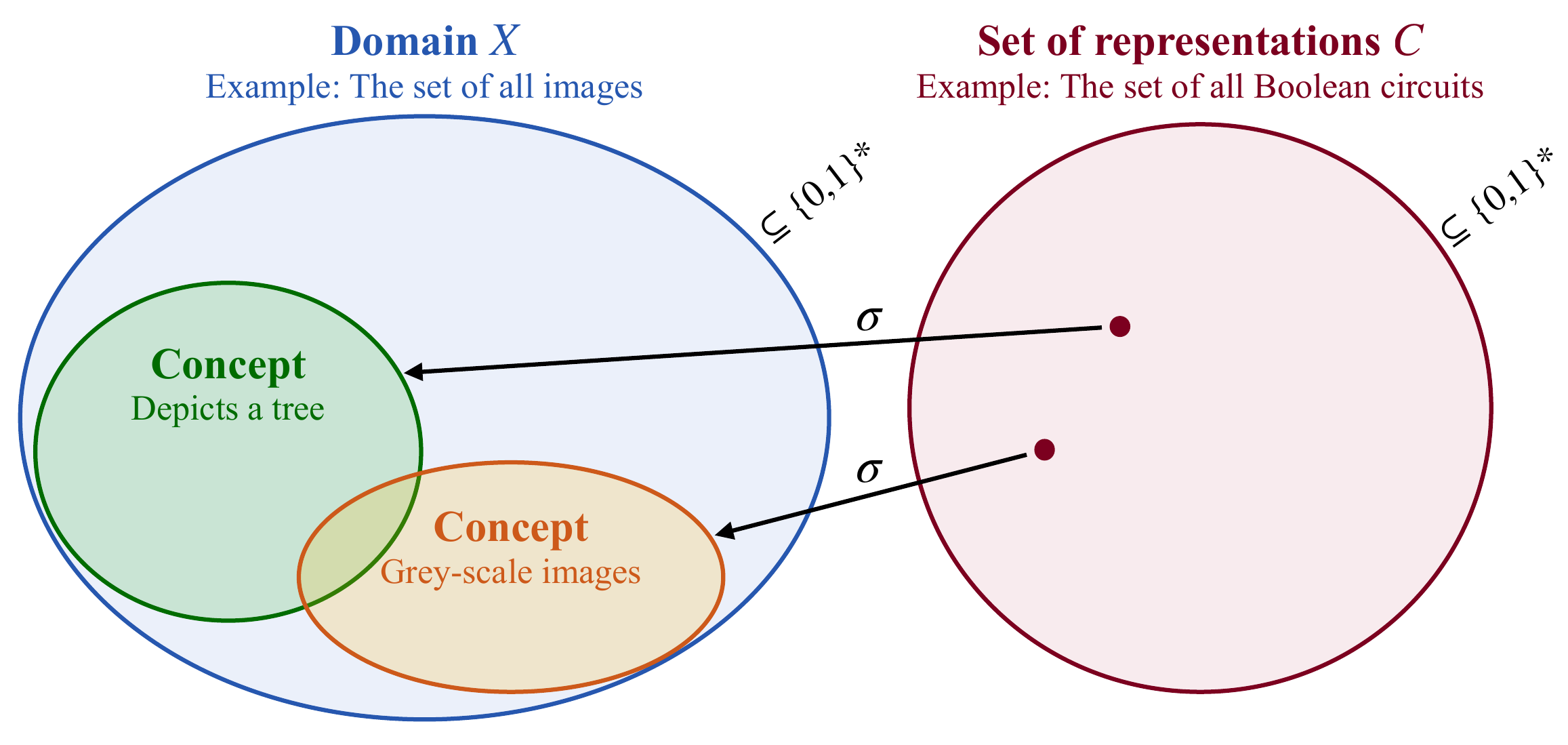}\\
    \caption{
        \ntext{\textbf{The interplay between representations and concept.}}
        The domain $X$ can formally be seen as a set of finite bit strings.
        Concepts are subsets of the domain, which can be described by representations $c \in C$.
        Together with the map $\sigma$, mapping representations to concepts, the tuple $(\sigma, C)$ is called a \emph{representation class}.
    }
    \label{fig:representations}
\end{figure*}

\subsubsection{Reductions among representations}
\label{subsec:reductions}
To understand our proof of the quantum advantage in combinatorial optimization, 
we require polynomial-time reductions among the evaluation problem of representation classes.
Intuitively, these reductions show that one representation class is at least as powerful 
as another and that they can be transformed into each other.
In Ref.~\cite{kearns_boolfunc_1993}, these reductions have been used to derive (classical) 
computationally hard problems for different representations.
First we define a technical construction, the \textit{evaluation problem}.

\begin{definition}[Evaluation problem $\Eval(C)$]
    \label{def:eval}
    {\color{white}.}\\
    \textbf{Instance} The pair $(c, x)$, where $C$ is a representation class over the domain $X$, $c \in C$ is a representation description and $x \in X$.\\
    \textbf{Solution} The result $c(x)$ of $c$ on $x$.
\end{definition}

Let $n \in \mathbb{N}$ and define $\BC_n$ to be the representation class of \emph{polynomially evaluatable Boolean circuits} with domain $X=\{0,1\}^n$ and with depth $O(\log(n))$ and size $O({\rm poly}(n))$, and let $\BC = \cup_{n \geq 1} \BC_n$. 
In a similar manner, define $\BF$ to be the representation class of \emph{Boolean formulae of poly-size}, 
define $\LSTM$ to be the representation class of 
\emph{log-space Turing machines} and finally, define $\DFA$ to be the representation class of 
\emph{deterministic finite automata} \cite{hopcroft_automata_2013} of poly-size. It holds that
\begin{align}
    &\Eval(\BC) \leq_p \Eval(\BF)  ,\label{eq:red1}\\
    &\Eval(\BF) \leq_p \Eval(\LSTM)  ,\label{eq:red2}\\
    &\Eval(\LSTM) \leq_p \Eval(\DFA)  \label{eq:red3}.
\end{align}
Subsequently, we sketch the proof ideas for the three reductions above.
The full proofs can be found in Refs.~\cite{kearns_learningtheory_1994} 
and \cite{kearns_boolfunc_1993}.
From here on after, we consider $n$ to be the size of the input to a Boolean circuit 
in $\BC$.
\begin{itemize}
    \item[(\ref{eq:red1})]
    We denote this polynomial-time reduction by $\tau_1$.
    Recall that $\tau_1$ is the instance transformation algorithm and the solution transformation is the identity.
    Let $c$ be a Boolean circuit in $\BC$ with depth $d=O(\log(n))$ and size $s=O({\rm poly}(n))$.
    Every Boolean circuit can be identified with a directed acyclic graph where each vertex has fan-in at most 2.
    The instance transformation in the reduction goes by starting at the output vertex of $c$ and recursively building the Boolean formula $f$ by walking back through $c$ and substituting clauses in $f$.
    $f$ will then consist of at most $2^d$ clauses over $n$ variables, which is size $O({\rm poly}(n))$.
    Clearly, $c$ and $f$ compute the same function, the reduction (\ref{eq:red1}) holds, since the transformation can be performed by an $O({\rm poly}(n))$-time algorithm.
    We have $\tau_1((c, x)) = (f,x)$.
        
    \item[(\ref{eq:red2})]
    We denote this polynomial-time reduction by $\tau_2$.
    This reduction uses the fact that we can transform any Boolean formula $f$ to a log-space Turing machine $m$ that, on input $x$ computes $m(x) = f(x)$, in time $O({\rm poly}(n))$.
    The details for this transformation can be found in 
    Ref.~\cite{kearns_learningtheory_1994}.
    Again, we denote the operation of this instance transformation algorithm as $\tau_2((f, x)) = (m,x)$.
        
    \item[(\ref{eq:red3})]
    We denote this polynomial-time reduction by $\tau_3$.
    The reduction uses a transformation of 
    log-space Turing machines to \emph{deterministic finite automata} 
    \cite{hopcroft_automata_2013}.
    In particular, for each log-space Turing machine $m$, one can construct a DFA $t$ that on input of 
    polynomially many copies of the original input $x$ simulates $m$ \cite{kearns_learningtheory_1994}.
    Note that in the reduction here, the input is transformed such that $x$ is repeated $p(n)$ 
    many times and then taken as the input to $t$, where $p$ is a polynomial in $n$.
    We thus have $\tau_3((m,x)) = (t, \underbrace{x,\dots ,x}_{{p}(n)\text{ times}})$, such that 
    \begin{equation}
    m(x) = t(x,\dots ,x).
    \end{equation}
\end{itemize}
Sometimes we are only interested in the transformation of the representation description and not in the input $x$. If we say that we transform a representation description $c$ using $\tau_{1,2,3}$, we omit the second input $x$ and simply write $\tau_{1,2,3}(c)$.
Recall that in the reductions above, since the instances are transformed by polynomial-time algorithms, the output instances can be larger than the input at most by a polynomial factor.

\subsubsection{Learning of representations} \label{subsec:learning_representations}
To obtain a classical hardness result for approximation tasks, the work of Ref.~\cite{kearns_boolfunc_1993}
use the so-called \emph{Occam learning framework} \cite{blumer_occam_1987}.
Generally speaking, the Occam learning framework
makes a connection between nearly minimal hypotheses which are consistent with
observations and the ability to generalize from the observed data in the sense of PAC learning.
To introduce this formalism, let $(\sigma, C),(\phi, H)$ be two representation classes over the domain $X \subseteq \bsn$.
In the following we write $C$ for $(\sigma, C)$ and $H$ for $(\phi, H)$ and denote the two representation descriptions $c \in C$ and $h \in H$ as elements of the set of representation descriptions of $(\sigma, C)$ and $(\phi, H)$.
Given a labeled sample 
\begin{equation}
S=\{(x_1, c(x_1)),\dots ,(x_m, c(x_m))\} 
\end{equation}
of $m$ \textit{examples}, we say that $h \in H$ 
is \textit{consistent} with $S$, if and only if $c(x_i) = h(x_i)$ for all $i = 1,\dots ,m$.
The $x_1, \ldots, x_m \in X$ might be drawn at random according to a distribution $D$ over $X$.
Importantly, we denote by $\opt(S)$ the size of the smallest $h \in H$ that is consistent with $S$.
The consistency problem is defined as follows:
\begin{definition}[Consistency problem $\Con(C,H)$ \cite{kearns_boolfunc_1993}]
    \label{def:conprob}
    {\color{white}.}\\
    \textbf{Instance} A labeled sample $S$ of some $c \in C$.\\
    \textbf{Solution} $h \in H$ such 
    that $h$ is consistent with $S$ and 
    $\sizeh$ is minimized.
\end{definition}
We denote by $\Con(C, H)$ the problem of finding a minimal $h \in H$ that is consistent with
some labeled sample $S$ of some $c \in C$ and likewise we call such a minimal consistent $h$ a solution to the consistency problem of an instance $S$ of $\Con(C,H)$.
Occam's razor makes a connection between the consistency problem and the ability to learn one representation class by another.
In this context learning is defined as follows:
Let $0 \leq \epsilon < 1$ and $0 < \delta \leq 1$. 
An $(\epsilon,\delta)$-\textit{probably approximately correct} (PAC) \citep{valiant_pac_1984} learning algorithm for $C$ by $H$ outputs an $h \in H$, such that $\error_{c,D}(h) \leq \epsilon$ with probability at least $1-\delta$ (for all distributions $D$ over $X$
and all $c \in C$).

We are now in the position to introduce the core theorem of this section, which connects the task of PAC learning and an approximation task.
Intuitively, the following theorem states that finding a hypothesis that explains the observed data (i.e., is consistent with $S$) and is \otext{significantly} \ntext{substantially} more compact than the data, is sufficient for PAC learning.

\begin{theorem}[Occam's razor \citep{blumer_occam_1987, kearns_boolfunc_1993}]
    \label{theorem:occam_learner}
    Given a labeled sample S of $c$ of size
    \begin{align}
        m = O\left(\frac{1}{\epsilon} \log \frac{1}{\delta} + \left( \frac{n^{\alpha}}{\epsilon} \log \frac{n^{\alpha}}{\epsilon} \right)^{1/(1-\beta)} \right) \text{,}
    \end{align}
    where the $m$ examples have been sampled independently from $D$ and for some fixed $\alpha \geq 1$ and $0 \leq \beta < 1$, any $h$ that is consistent with $S$ and which satisfies
    \begin{align}
        \sizeh \leq \opt(S)^{\alpha} \sizeS^{\beta} \label{eq:occam}
    \end{align}
    does also satisfy $\error_{c,D}(h) \leq \epsilon$ with probability at least $1-\delta$.\\
\end{theorem}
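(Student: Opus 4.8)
The plan is to run the classical Occam's-razor union-bound argument: short hypotheses are few, and any fixed hypothesis with error exceeding $\epsilon$ is exponentially unlikely to stay consistent with $m$ independent examples, so once $m$ is large enough these two properties cannot coexist.

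The first move is to convert the data-dependent size constraint of Eq.~(\ref{eq:occam}) into a fixed one. In this framework $\opt(S) \le n$, since a representation of the target concept in $H$ of size at most $n$ is itself consistent with $S$; hence any $h$ satisfying Eq.~(\ref{eq:occam}) obeys $\sizeh \le \opt(S)^{\alpha} m^{\beta} \le n^{\alpha} m^{\beta} =: b$, a bound that no longer depends on the random sample. This is the conceptually crucial step, because it lets me define, \emph{before} drawing $S$, the sample-independent ``bad set''
\begin{equation}
B = \{ h \in H : \sizeh \le b \text{ and } \error_{c,D}(h) > \epsilon \}.
\end{equation}
As representation descriptions are bit strings, the number of $h$ with $\sizeh \le b$ is at most $2^{b+1}$, so $|B| \le 2^{b+1}$.

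Next I would bound the failure probability. If the returned $h$ has $\error_{c,D}(h) > \epsilon$, then, being consistent with $S$ and satisfying $\sizeh \le b$, it lies in $B$; thus the failure event is contained in $\{\exists h \in B : h \text{ consistent with } S\}$. For each fixed $h \in B$ every independent draw agrees with $c$ with probability $1-\error_{c,D}(h) < 1-\epsilon$, whence $\Prob[h \text{ consistent with } S] < (1-\epsilon)^{m} \le e^{-\epsilon m}$. A union bound over the fixed set $B$ then gives
\begin{equation}
\Prob[\text{failure}] \le |B|\, e^{-\epsilon m} \le 2^{b+1} e^{-\epsilon m}.
\end{equation}

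Finally I would force this quantity below $\delta$. Taking logarithms reduces the requirement to $\epsilon m \gtrsim b + \log(1/\delta) = n^{\alpha} m^{\beta} + \log(1/\delta)$, and the crux is that the size budget $b$ itself depends on $m$ through $m^{\beta}$. Splitting $\epsilon m$ into two contributions, one absorbs $\log(1/\delta)$ and yields the term $O(\tfrac1\epsilon \log\tfrac1\delta)$, while the other must satisfy $\epsilon m \gtrsim n^{\alpha} m^{\beta}$, i.e.\ $m^{1-\beta} \gtrsim n^{\alpha}/\epsilon$. Here $\beta < 1$ is essential, as it guarantees $m^{1-\beta}$ grows with $m$; solving for $m$ produces the exponent $1/(1-\beta)$ and, after balancing the two contributions, the second term of the stated bound together with its logarithmic factor. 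The main obstacle is precisely this transcendental solve: disentangling the self-referential dependence of $b$ on $m$ and verifying that the claimed sample size satisfies the inequality uniformly over the admissible ranges of $\epsilon$, $\delta$, $\alpha$ and $\beta$.
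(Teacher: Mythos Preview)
The paper does not supply its own proof of this theorem; it is quoted as a classical result from Blumer et al.\ and Kearns--Valiant and used as a black box. Your outline is exactly the standard Occam's-razor argument from those references: count the short hypotheses, show each fixed bad hypothesis survives $m$ i.i.d.\ draws with probability at most $(1-\epsilon)^m$, and union-bound. Structurally this is correct and is the intended proof.

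One step, however, is not justified as written and in fact clashes with the paper's own conventions. You assert $\opt(S)\le n$ on the grounds that ``a representation of the target concept in $H$ of size at most $n$ is itself consistent with $S$.'' In this paper $n$ is the input bit-length (the domain is $\{0,1\}^n$), not the size of the target representation; indeed the paper later uses the \emph{opposite} inequality $n\le \opt(S)$ (proof of Theorem~4.1), since any representation reading $n$ input bits has size at least $n$, and the target RSA circuits have size $\mathrm{poly}(n)$ rather than $\le n$. The correct way to pass from the data-dependent bound $\opt(S)^\alpha m^\beta$ to a fixed budget is to use $\opt(S)\le |c|$ (which holds whenever $H$ can represent $C$) and to read the ``$n$'' in the sample-complexity expression as a stand-in for the target-size parameter, a convention the cited references adopt and the paper has silently inherited. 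With that substitution your counting bound $2^{b+1}$, the exponential decay $e^{-\epsilon m}$, and the two-term splitting of $\epsilon m$ all go through exactly as you describe; only the justification of that one inequality needs to be rewritten.
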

Here, $\alpha$ and $\beta$
are fixed values for the Occam's razor prescription, the intuition for them
being hinted at in
Ref.\ \cite{blumer_occam_1987}. When $m$ is fixed to a 
sufficiently large number, fulfilling the 
scaling of the above theorem, then $\alpha$ can be 
seen as reflecting the property that $\opt(S)^{\alpha}$ bounds some polynomial in $\opt(S)$ and $\beta$ can hence be viewed as a ``compression parameter''. If $\beta = 0$, we have complete compression. Then the algorithm
provides a consistent hypothesis of 
complexity at most $opt_{Con}(S)^\alpha$,
independent of the sample size.
The sample size needed is then
$m = O(\frac{1}{\epsilon} \log \frac{1}{\delta})$.
For $\beta\rightarrow 1$, we actually have not learned much, since almost all of $S$ can be encoded in $h$.

Then, note that the size of $S$ is a polynomial in $(n,\frac{1}{\epsilon},\frac{1}{\delta})$.
The variable $\alpha$ resembles that $|h|$ must be smaller than some polynomial in the optimal solution size, while $\beta$ forces that $h$ does not simply hard-encode $S$.
Clearly, it follows that any algorithm that for all $c \in C$ and all $D$, on input $S$ sampled according to $D$ of size ${\rm poly}(n,\frac{1}{\epsilon},\frac{1}{\delta})$, outputs an $h \in H$ with $\sizeh$
upper bounded as in 
Theorem \ref{theorem:occam_learner} is a PAC learning algorithm for $C$ by $H$.
Importantly, learning $C$ by $H$ can be interpreted as an approximation task. Specifically, the task is to approximate the optimal solution $\opt(S)$, which is the size of the smallest representation consistent with $S$, by $|h|$, where $h$ is a representation that is also consistent with $S$, for any $S$ of sufficient size.
An algorithm achieving such an approximation within a factor of $\opt(S)^{\alpha-1}|S|^\beta$, for all $S$ with $|S| = {\rm poly}(n,\frac{1}{\epsilon},\frac{1}{\delta})$, is an $(\epsilon,\delta)$-PAC learner for $C$.
In the remainder of this work, when we say that some ``algorithm approximates the solution of the $\Con(C, H)$ problem", we mean that the algorithm outputs an $h$, such that $\sizeh$ approximates $\opt(S)$ by a factor 
$\opt(S)^{\alpha-1}|S|^\beta$,
where $h$ has the important property of being consistent with $S$.
This sense of approximation might seem unnatural, but the $Con$ problem will later be reduced to a combinatorial optimization task, where it is natural to approximate some scalar quantity and satisfy some constraints.

\subsubsection{Formula colouring problem}
We now introduce the \textit{formula colouring problem} (FC) that takes the
centre stage in our later argument. It is a combinatorial optimization problem that 
has originally been introduced in Ref.~\cite{kearns_boolfunc_1993} as  
a generalization of the more common \emph{graph colouring problem}. It is an optimization
problem of the type as is frequently considered in notions of quantum approximate
optimization: In fact, in a subsequent section, we will formulate this 
problem as a problem of minimizing the energy of a commuting local Hamiltonian,
to make that connection explicit. It is one of the main results of this work to 
show a super-polynomial quantum advantage for FC and integer programming.
Let $z_1, \dots, z_m \in \mathbb{N}$ be the \textit{variables} in a Boolean formula, 
each being assigned an integer value, which acts as the integer valued 
\textit{colour} of the variable. That is to say, each of the variables 
$z_1, \dots, z_m$
takes exactly one of the possible values referred to as 
colours. We regard an assignment of colours to the $z_i$ (called a colouring) as a partition 
of the variable set into equivalence classes. That is to say, two variables have the same colour if and only if they are in the same equivalence class. For the FC problem, we consider Boolean formulae $F(z_1, \dots, z_m)$ which consist of conjunctions 
of two types of clauses. On the one hand, these are clauses of the form $(z_i \neq z_j)$.
This is, in fact, precisely of the form as the clauses of the more common 
 \emph{graph colouring problem}. On the other hand, there are clauses of the form
$((z_i = z_j) \rightarrow (z_k = z_l))$. This 
material conditional, as it is called in Boolean logic, 
can equivalently and possibly more commonly be written 
as
\begin{equation}
((z_i \neq z_j) \lor (z_k = z_l)).
\end{equation}
A \textit{colouring} is an assignment of colours to the $z_i$, described by a partitioning $P$ of the variable set into $k$ equivalence classes, i.e., $|P| = k$.
This means that $z_i = z_j$ if and only if they are in the same partition of the $k$ partitions in $P$.
We are now in the position to formulate the formula colouring problem.
\begin{definition}[Formula colouring problem $\FC$ \citep{kearns_boolfunc_1993}]
    {\color{white}.}\\
    \textbf{Instance} A Boolean formula $F(z_1, \dots, z_m)$ which consists of conjunctions of clauses of the form either $(z_i \neq z_j)$ or the form $((z_i = z_j) \rightarrow (z_k = z_l))$.\\
    \textbf{Solution} A minimal colouring $P$ for $F(z_1, \dots, z_m)$ such that $F$ is satisfied.
\end{definition}
A \textit{minimum solution} to the FC problem is a colouring with the fewest colours, i.e., $|P|$ is minimal for all possible 
colourings such that $F$ is satisfied.
The example given in Ref.~\cite{kearns_boolfunc_1993} is the formula
\begin{equation}
(z_1 = z_2) \lor ((z_1 \neq z_2) \land  (z_3 \neq  z_4))
\end{equation}
has as a model the two-colour partition $\{z_1, z_3\}$,
$\{z_2, z_4\}$ and has as a minimum model the one-colour
partition $\{z_1, z_2, z_3, z_4\}$. The formula colouring problem is obviously NP-complete, as the problem is in NP and graph colouring is NP-hard. 

\subsubsection{The RSA encryption function}
\label{subsec:rsa}
Throughout this work, we will make use on the hardness of inverting the RSA encryption function \cite{rivest_RSA_1978}, 
which forms the foundation of the security of the RSA public-key cryptosystem, one of the canonical public-key crypto-systems and presumed to be secure against classical adversaries \cite{goldreich_foundations2_2004}.

Let $N = p \times q$ be the product of two primes $p$ and $q$, both of similar bit-length. Define \emph{Euler's totient function} $\phi$, where $\phi(N)$ is equal to the number of positive integers up to $N$ that are relative prime to $N$. It holds that $x^{\phi(N)} = 1 \mod N$. When two parties, which we refer to as Bob and Alice, wish to communicate via an authenticated but public channel, they can do so as follows: First, Alice generates two primes $p$ and $q$ of similar bit-length and computes their product $N = p \times q$. Then, Alice generates a so-called public-private key pair $(d,e)$, where $d$ is the \textit{secret key} satisfying $d \times e \mod \phi(N) = 1$, and $e$ is the \textit{public exponent}. Alice shares the public key $(e,N)$ with Bob over the public channel. We define the RSA encryption function for a given exponent $e$, a message $x \in \mathbb{Z}_N$, and a \textit{modulus} $N$ as 
\begin{equation}
RSA(x, N, e) = x^e \mod N .
\end{equation}
To encrypt a message $x \in \mathbb{Z}_N$, Bob simply computes the output of the RSA encryption function, given $N$ and $e$. Bob then sends the ciphertext $c = x^e \mod N$ to Alice, who decrypts the ciphertext by computing $c^d \mod N = (x^e)^d \mod N = x^{1+i \times \phi(N)} \mod N = x \mod N$, where the last step follows from the fact that $x^{\phi(N)} = 1 \mod N$ and $e \times d = 1 + i \times \phi(N)$ for some $i \in \mathbb{N}$ because $e \times d \mod \phi(N) = 1$.

The security of the RSA cryptosystem is closely related on the presumed hardness of integer factoring and, more generally, is based on the presumed hardness of \textit{inverting} the RSA encryption function without knowledge of the secret key $d$. That is, there is no known classical polynomial-time algorithm that, given $(\RSA(x,N,e), N, e)$ outputs $x$. On a quantum computer, however, Shor's 
algorithm \cite{shor_factoring_1994} can be used to factor the integer $N$ in polynomial time. This immediately gives rise to a quantum polynomial time algorithm that inverts the RSA encryption function; Simply factor the public modulus using Shor's algorithm, and then compute $\phi(N) = (p-1) \times (q-1)$. Then, one can find a $d$ such that $e \times d \mod \phi(N) = 1$ by using the 
\emph{extended Euclidean algorithm}. In summary, under the standard cryptographic assumption that the RSA encryption function is hard to invert, Shor's algorithm thus gives rise to a computational quantum-classical separation. As we will show, this separation extends to the approximation of combinatorial optimization problems as well. 

Throughout this work, we will make use of the fact that determining the \emph{least significant bit} (LSB) of $x$, given $\RSA(x, N,e)$ is as hard as inverting the RSA encryption function. Formally, 
Alexi et.~al.~\cite{alexi_rsa+rabin_1988} have proven that if there exists a classical polynomial-time algorithm that finds the LSB of $x$, given $RSA(x,N,e)$, then there exists a classical polynomial-time algorithm that inverts the RSA encryption function.

\subsection{Classical hardness of approximation}\label{section:classical_hardness}
To show our quantum advantage, we require a classical hardness result and quantum efficiency result.
In this section, we establish the classical hardness of approximating combinatorial optimization solutions.
We build on the results of Ref.~\cite{kearns_boolfunc_1993}, where the hardness of approximation tasks has been established. Furthermore, their work shows how the these hard-to-approximate problems can be 
reduced to the combinatorial optimization problem of \emph{formula colouring}.
We then extend these results by showing an approximation-preserving reduction from formula colouring to \textit{integer linear programming} (ILP).
These results will constitute the classical hardness part for the quantum-classical separation we show.
\begin{figure*}[t]
    \centering
      \includegraphics[width=.6\textwidth]{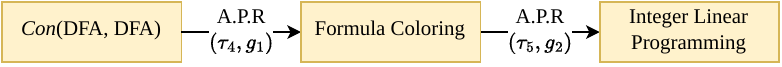}\\
    \caption{\ntext{\textbf{The reduction chain from the consistency problem to combinatorial optimization problems.}} In Section \ref{subsec:classhard_bc}, we introduce Boolean circuits, whose sizes are hard to approximate by $|h|$, where $h$ is a hypothesis that is consistent with a sample labeled by the circuits. This directly implies the approximation hardness of $\Con(\DFA,\DFA)$. In Section \ref{subseq:classhard_fc}, 
    we present an approximation-preserving reduction from $\Con(\DFA, \DFA)$ to formula colouring \cite{kearns_boolfunc_1993}. We then extend the results 
    of Ref.~\cite{kearns_boolfunc_1993} by showing in Section \ref{subsec:classhard_ilp} an approximation-preserving reduction from formula colouring to integer linear programming, yielding the approximation hardness for ILP.}
    \label{fig:class_hard_overview}
\end{figure*}
Fig.~\ref{fig:class_hard_overview} gives a high level overview of the results presented in this section.

\subsubsection{Approximation hardness of the $\Con$ problem}\label{subsec:classhard_bc}\label{subsec:bool_circuits}
In this subsection,     we present the result that approximating the solution of $\Con(\DFA, \DFA)$ is hard using a classical computer~\cite{kearns_boolfunc_1993}. This result is obtained through the assumption that inverting the RSA encryption function is hard, a widely accepted cryptographic assumption.
    To do this, one defines a class of Boolean circuits that essentially decrypt a given RSA ciphertext and output the LSB of the cleartext.
    Intuitively, the authors of Ref.~\cite{kearns_boolfunc_1993} 
    show that, since PAC learning these Boolean circuits is hard (otherwise one would be able to invert RSA), the approximation of these decryption circuits by any polynomially evaluatable representation class in the sense of Theorem \ref{theorem:occam_learner} must also be hard, using a classical computer. They then show that this implies that approximating the solution of $\Con(\DFA,\DFA)$ must also be hard.
    To follow the argumentation in Ref.~\cite{kearns_boolfunc_1993}, let $N \in \mathbb{N}$ and $x \in \mathbb{Z}_N$ and define
     \begin{align}
     \begin{split}
        \powers_N(x) := &x \mod N, x^2 \mod N, x^4 \mod N, \dots\\
            &\dots, x^{2^{\ceil{log(N)}}} \mod N         
     \end{split}
    \end{align}
    as the sequence of the first $\ceil{\log(N)}+1$ square powers of $x$.
    \begin{definition}[Boolean circuit for the LSB of RSA \cite{kearns_boolfunc_1993}]
        \label{definition:rsa_learning_problem}
        Let $\CRSA_n \subset \BC_n$ and $\CRSA = \bigcup_{n \geq 1} \CRSA_{n}$ be the representation class of 
        \emph{log-depth, poly-size Boolean circuits} that, on input $\bcinput$, output $LSB(x)$ for all $x \in \mathbb{Z}_N$.
        Each representation in $\CRSA_{n}$ is defined by a triple $(p,q,e)$ and this representation will be denoted $r_{(p,q,e)}$,
        where $p$ and $q$ are primes of exactly $n/2$ bits and $e \in \mathbb{Z}_N$ and $N = p \cdot q$.
        \\
        An example of $r_{(p,q,e)} \in \CRSA_{n}$ is of the form
        \begin{align}
            &\left(\bcinput, LSB(x)\right),
        \end{align}
        with $x \in \mathbb{Z}_N$.
    \end{definition}
    It is important to note at this point that the calculation of the LSB of $x$, given 
    the input $\bcinput$ can indeed be performed by a $O(\log(n))$-depth, ${\rm poly}(n)$-size Boolean circuit, if the decryption key $d$ is known \citep{kearns_boolfunc_1993}.
    \begin{figure*}[t]
        \centering
          \includegraphics[width=.8\textwidth]{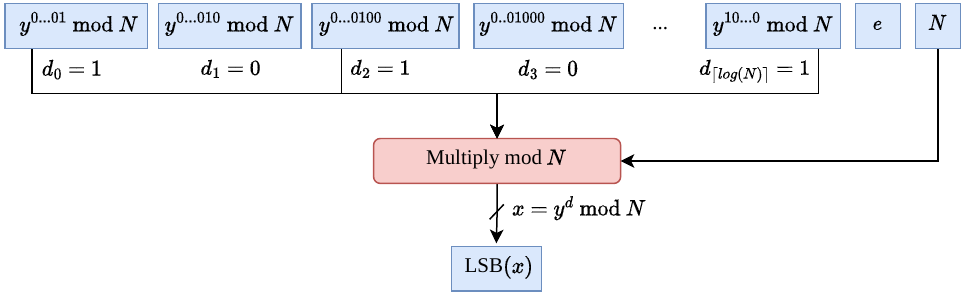}\\
        \caption{
            \ntext{\textbf{A Boolean circuit in the class C-RSA.}}
            The input to the circuit in $\CRSA$ is the power sequence of the RSA ciphertext of $\RSA(x,N,e)=y$.
            The circuit computes the LSB of $x$ by simply performing modular multiplication on
            the $2^i$'th powers of the power sequence where the secret key bit $d_i=1$, for the secret key $d$.
            Thereby the secret key $d$ is hard-wired into the circuit and the decryption $x=y^d \mod N$ is explicitly performed.
            This can be done in an $O(\log(n))$ deep circuit \citep{beame_iteratedproducts_1986}.
        }
        \label{fig:BCRSA}
    \end{figure*}
    In Fig.~\ref{fig:BCRSA}, we depict a schematic picture of such a Boolean circuit in $\CRSA$.
    Since learning the LSB of the cleartext is as hard as inverting the RSA function \cite{alexi_rsa+rabin_1988}, which is widely assumed to be intractable for classical computers, Ref.~\cite{kearns_boolfunc_1993} shows the classical approximation hardness of $\Con(\CRSA, H)$, where $H$ is any polynomially evaluatable representation class.
    The following theorem states that (assuming the classical hardness of inverting RSA) and given some sample $S$ of some $r_{(p,q,e)}\in \CRSA$, no polynomial-time classical algorithm can output a hypothesis $h \in H$ that is consistent with $S$ and only polynomially larger than the smallest possible hypothesis.
    \begin{theorem}[Classical approximation hardness of $\CRSA$ \citep{kearns_boolfunc_1993}]
    \label{theorem:classical_hardness}
        Let $H$ be any polynomially evaluatable representation class. 
        Assuming the hardness of inverting the RSA function, there exists 
        no classical probabilistic polynomial-time algorithm
        that on input an instance $S$ of $\Con(\CRSA, H)$ finds a solution $h \in H$ that is consistent with $S$ and approximates the size $\opt(S)$ of the optimal solution by
        $$|h| \leq (opt_{Con}(S))^{\alpha} |S|^{\beta}$$
        for all $S$ and any $\alpha \geq 1$ and $0 \leq \beta < 1$.
    \end{theorem}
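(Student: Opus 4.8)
The plan is to argue by contradiction, using the Occam's razor connection of Theorem~\ref{theorem:occam_learner} to turn the hypothesized approximation algorithm into a classical RSA inverter. Suppose there were a classical probabilistic polynomial-time algorithm $\mathcal{A}$ that, on every instance $S$ of $\Con(\CRSA, H)$, returns an $h \in H$ that is consistent with $S$ and satisfies $|h| \leq \opt(S)^{\alpha}|S|^{\beta}$. I would first feed $\mathcal{A}$ a sample of the prescribed size $m = {\rm poly}(n, 1/\epsilon, 1/\delta)$ from Theorem~\ref{theorem:occam_learner}. Since the returned $h$ is consistent and obeys the size bound, that theorem guarantees $\error_{r,D}(h) \leq \epsilon$ with probability at least $1-\delta$; hence $\mathcal{A}$ is in fact an $(\epsilon,\delta)$-PAC learning algorithm for $\CRSA$ by $H$. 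Because $\mathcal{A}$ halts in polynomial time its output obeys $|h| = {\rm poly}(n)$, and as $H$ is polynomially evaluatable the predictor $x \mapsto h(x)$ can be computed in polynomial time. Note that no separate bound on $\opt(S)$ is needed: whatever its value, $\mathcal{A}$ is assumed to meet the bound, and Occam's razor certifies low error of any such consistent $h$.

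The second step is to run this learner using only the public key, thereby obtaining an efficient predictor for the least significant bit of the RSA plaintext. Fix a random key $(p,q,e)$ with $N = pq$ and take $D$ to be (essentially) the uniform distribution over $\mathbb{Z}_N$ pushed through the encoding $x \mapsto \bin(\powers_N(\RSA(x,N,e)), N, e)$. The key observation is that the example oracle for the target $r_{(p,q,e)}$ is simulable from public data alone: drawing $x$ uniformly, computing $y = x^e \bmod N$ together with its sequence of square powers, and emitting $\big(\bin(\powers_N(y), N, e), \lsb(x)\big)$ produces a correctly labeled example without ever touching the secret exponent $d$, since the label $\lsb(x)$ is known by construction. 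Running $\mathcal{A}$ on such a sample therefore yields, with probability at least $1-\delta$, a polynomially evaluatable $h$ agreeing with $\lsb(x)$ on a $(1-\epsilon)$-fraction of inputs. Setting, say, $\epsilon = 1/4$ gives a polynomial-time algorithm that, on a fresh ciphertext $y^{\ast} = (x^{\ast})^{e} \bmod N$ for uniform $x^{\ast}$, outputs $\lsb(x^{\ast})$ correctly with probability bounded away from $1/2$ by a constant.

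This contradicts the hardness of the RSA least significant bit. By the result of Alexi et al.~\cite{alexi_rsa+rabin_1988}, any classical polynomial-time algorithm that predicts $\lsb(x)$ from $\RSA(x,N,e)$ with non-negligible advantage can be bootstrapped into a classical polynomial-time algorithm inverting $\RSA$, contradicting the standing cryptographic assumption. It remains only to check that the instances we feed to $\mathcal{A}$ are legitimate: the target $r_{(p,q,e)}$ indeed lies in $\CRSA$, because decryption with the hard-wired key $d$ amounts to iterated modular products and is realizable by an $O(\log n)$-depth, ${\rm poly}(n)$-size Boolean circuit~\cite{beame_iteratedproducts_1986}, so $S$ is a valid instance of $\Con(\CRSA, H)$ that $\mathcal{A}$ must solve.

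I expect the main obstacle to be the bridge from \emph{small PAC error under the uniform distribution} to \emph{inverting RSA}. Three points need care: that the distribution-free PAC guarantee lets us pin $D$ to uniform, so the learned predictor is exactly a hard-core-bit predictor for the same distribution on which the self-reduction of Alexi et al.\ operates; that the example oracle is efficiently simulable from the public key alone, making the entire construction a bona fide polynomial-time adversary; and that $\epsilon$ can be pushed below any constant $< 1/2$ while keeping $m$ polynomial, which secures a non-negligible advantage. The amplification from a weak bit-advantage to full inversion is precisely the content of Ref.~\cite{alexi_rsa+rabin_1988} and is invoked as a black box.
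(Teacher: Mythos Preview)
Your proposal is correct and follows essentially the same approach that the paper sketches (and attributes to Kearns and Valiant): assume an approximator exists, invoke Occam's razor (Theorem~\ref{theorem:occam_learner}) to obtain a PAC learner for $\CRSA$, simulate the example oracle from the public key $(N,e)$ alone to produce an efficient LSB predictor, and then apply the hard-core bit result of Alexi et al.~\cite{alexi_rsa+rabin_1988} to derive an RSA inverter. The paper does not spell out the proof beyond this outline, so your write-up is in fact more detailed than what appears there.
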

    Since $|S| = n\times m = n \times {\rm poly}(n,\frac{1}{\epsilon}, \frac{1}{\delta})$ and $\alpha \geq 1$ we get that the optimal size $\opt(S)$ cannot be approximated up to a polynomial factor, holding for all classical probabilistic polynomial-time algorithms, where the sense of approximation is explained in detail in Section \ref{subsec:learning_representations}.
    
    \subsubsection{Classical approximation hardness for more representation classes}
    \label{subsubsec:classical-approximation-hardness}
    Furthermore, Ref.~\cite{kearns_boolfunc_1993} shows that the approximation hardness of $\CRSA$ implies approximation hardness for \emph{Boolean formulae}, \emph{log-space Turing machines} and DFAs.
    In particular, let $\BFRSA$ be the class of \emph{Boolean formulae} that we obtain when we reduce every instance in $\CRSA$ using $\tau_{1}$, i.e., $\BFRSA = \{F \mid (F,x) = \tau_{1}(I) \text{ and } I \text{ instance of } \Eval(\CRSA)\}$.
    In a similar manner, $\LSTMRSA$ is the class of \emph{log-space Turing machines} that we obtain when we reduce $\BFRSA$ using $\tau_{2}$ and finally, $\DFARSA$ is the class of DFAs that we obtain when using $\tau_{3}$ on $\BFRSA$.
    Since the evaluation problem of resulting representations are poly-time reducible to each other and are at most polynomially larger, the following holds \citep{kearns_boolfunc_1993}:
    \begin{theorem}[Classical approximation hardness of more representations \citep{kearns_boolfunc_1993}]
    \label{theorem:classical_hardness_more}
        Let $H$ be any polynomially evaluatable representation class.
        Assuming the hardness of inverting the RSA function, there exists no classical probabilistic polynomial-time algorithm
        that, on input an instance $S$ of (a) $\Con(\BFRSA, H)$, (b) $\Con(\LSTMRSA, H)$, or 
        (c) $\Con(\DFARSA, H)$, finds a solution $h \in H$ that is consistent with $S$ and approximates the size $\opt(S)$ of the optimal solution by
        $$|h| \leq (opt_{Con}(S))^{\alpha} |S|^{\beta}$$
        for all $S$ and any $\alpha \geq 1$ and $0 \leq \beta < 1$.
    \end{theorem}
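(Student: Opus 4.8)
The plan is to derive all three parts uniformly from the already-established hardness of $\Con(\CRSA, H)$ in Theorem~\ref{theorem:classical_hardness}, arguing by contradiction. Suppose, for part (c), that there were a classical probabilistic polynomial-time algorithm $A$ that on every instance $S'$ of $\Con(\DFARSA, H)$ outputs an $h \in H$ consistent with $S'$ and obeying $|h| \leq \opt(S')^{\alpha} |S'|^{\beta}$ for some fixed $\alpha \geq 1$ and $0 \leq \beta < 1$. From $A$ I would construct a polynomial-time algorithm solving $\Con(\CRSA, H)$ in the same approximate sense, contradicting Theorem~\ref{theorem:classical_hardness}. Parts (a) and (b) are identical in structure, using the shorter reduction chains $\tau_1$ and $\tau_2 \circ \tau_1$ in place of the full chain $\tau_3 \circ \tau_2 \circ \tau_1$.

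The engine of the reduction is that the instance transformations $\tau_1, \tau_2, \tau_3$ are function-preserving. Writing $\rho$ for the composite input transformation induced by the chain and $t = \tau_3(\tau_2(\tau_1(r_{(p,q,e)})))$ for the image of a target $r_{(p,q,e)} \in \CRSA$, we have $t(\rho(x)) = r_{(p,q,e)}(x)$ for all $x$ by construction of the three reductions. Hence, given a labeled sample $S = \{(x_i, r_{(p,q,e)}(x_i))\}_i$ of $\CRSA$, applying $\rho$ to every input produces $S' = \{(\rho(x_i), r_{(p,q,e)}(x_i))\}_i$, a valid labeled sample of the $\DFARSA$-concept $t$ carrying the \emph{same} labels. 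First I would run $A$ on $S'$ to get a consistent $h \in H$, and then return the hypothesis $h' = h \circ \rho$ that first applies $\rho$ and then $h$. Because $\rho$ is fixed and polynomial-time computable, $h'$ is polynomially evaluatable, $h'(x_i) = h(\rho(x_i)) = r_{(p,q,e)}(x_i)$ so that $h'$ is consistent with $S$, and $|h'| \leq |h| + {\rm poly}(n)$. For parts (a) and (b) the map $\rho$ is the identity (the reductions $\tau_1, \tau_2$ leave the input untouched), so $S' = S$ and $h' = h$, and the reduction is immediate; only case (c) carries the nontrivial $p(n)$-fold repetition from $\tau_3$.

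It then remains to check that the reduction is approximation-preserving. Here I would use that $\rho$ and each $\tau_i$ inflate lengths only polynomially, so $|S'| = {\rm poly}(|S|)$, and that mapping an $H$-hypothesis for $S'$ back only composes it with a poly-size gadget for $\rho$, so the optimum $\opt(S')$ of the $\DFARSA$-instance and the optimum $\opt(S)$ of the $\CRSA$-instance are polynomially related. Feeding the bound $|h| \leq \opt(S')^{\alpha} |S'|^{\beta}$ through these polynomial relations should yield $|h'| \leq \opt(S)^{\alpha'} |S|^{\beta'}$ with $\alpha' \geq 1$ and $0 \leq \beta' < 1$ still admissible, contradicting Theorem~\ref{theorem:classical_hardness}.

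The hard part will be exactly this last bookkeeping: relating $\opt(S')$, the size of the smallest \emph{consistent $H$-hypothesis for the transformed sample}, to $\opt(S)$, ensuring that the repetition factor $p(n)$ introduced by $\tau_3$ does not push $\beta'$ up to or beyond $1$, and in particular handling that $h' = h \circ \rho$ need not itself lie in $H$. A clean way to sidestep all of this is to route the argument through the PAC-learning equivalence of Section~\ref{subsec:learning_representations}: approximating $\Con(\DFARSA, H)$ in the stated sense is equivalent to PAC-learning $\DFARSA$ by $H$, and $h' = h \circ \rho$ converts such a learner into a polynomial-time predictor of $\lsb(x)$ from $\RSA(x,N,e)$ with the same error under the pushforward of the sampling distribution through $\rho$. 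Since breaking RSA only requires a polynomial-time \emph{predictor} of $\lsb(x)$ and not membership in $H$, the composition $h'$ suffices, and by the result of Alexi et al.~\cite{alexi_rsa+rabin_1988} it would invert the RSA function; the obstacle then reduces to verifying that the induced distribution and the required sample size stay polynomial, which follows because $\rho$ is a fixed polynomial-time map.
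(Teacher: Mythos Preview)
Your proposal is correct and matches the paper's approach, which is itself only a one-line justification citing Kearns--Valiant: ``Since the evaluation problem of resulting representations are poly-time reducible to each other and are at most polynomially larger, the following holds.'' You have essentially unpacked that sentence, and your instinct to fall back on the PAC-learning route in the final paragraph is exactly right: the direct bookkeeping on $\opt(S')$ versus $\opt(S)$ is awkward because both optima are over the \emph{same} arbitrary $H$ applied to \emph{different} input domains, whereas the Occam argument only needs a polynomial-time predictor of $\lsb(x)$, for which $h\circ\rho$ qualifies regardless of whether it sits in $H$.
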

    Specifically, note that approximating the solution of $\Con(\DFARSA,\DFA)$ is at least as hard as to approximate the solution of $\Con(\DFARSA, H)$.
    
\subsubsection{Approximation hardness of formula colouring} \label{subseq:classhard_fc}
In this work we are interested in showing a quantum advantage for approximating the solution of combinatorial optimization problems.
Therefor\ntext{e}, we require a classical approximation hardness result for a combinatorial optimization problem.
To that end, the work of Ref.~\cite{kearns_boolfunc_1993} gives an approximation-preserving reduction from the $\Con(\DFA, \DFA)$ problem to the \emph{formula colouring problem}, which is a combinatorial optimization problem.
We denote the approximation-preserving reduction from $\Con(\DFA, \DFA)$ to $\FC$ by $(\tau_4, g_1)$, where we will explicitly give the construction of the instance transformation $\tau_4$, which maps an instance $S$ of $\Con(\DFA, \DFA)$ to an instance $F_S$ of $\FC$.
First, observe that $S$ contains the examples 
$(w_1, b_1), (w_2, b_2), \dots, (w_m, b_m)$ where $w_i \in \{0,1\}^{k}$ and the labels $b_i \in \{0,1\}$.
The formula $F_S$ will be over variables $z_{i}^{j}$, where $1 \leq i \leq m$ and $0 \leq j < k$.
Essentially, each variable $z_i^j$ will correspond to the state that a consistent DFA would be in after reading the $j$-th bit of $w_i$.

We now give the construction for the formula $F_S$:
For each $i_1, i_2$ and $j_1, j_2$, such that $0 \leq j_1, j_2 < k$ and $w_{i_1}^{j_1+1} = w_{i_2}^{j_2+1}$, we add the predicate
\begin{align}
      ((z_{i_1}^{j_1} = z_{i_2}^{j_2}) \rightarrow (z_{i_1}^{j_1+1} = z_{i_2}^{j_2+1}))
\end{align}
to the conjunctions in $F_S$. 
Intuitively, this encodes that for two inputs $w_{i_1},w_{i_2}$, a DFA that is in the same state for both inputs and then reads the same symbol for both those strings next, the resulting state should also be the same.
To ensure the DFA is consistent with the labels of the sample as well, for each $1 \leq i_1, i_2 \leq m$, such that $b_{i_1} \neq b_{i_2}$, we add the predicate
\begin{align}
    (z_{i_1}^k \neq z_{i_2}^k)
\end{align}
to the conjunctions in $F_S$.
Those clauses encode the fact that for different labels, the states (after reading the whole input) of a consistent DFA must be different, since any state can either only accept or reject.

If $|S|$ is the number of bits in $S$, the resulting $F_S$ consists of $\Theta(|S|^2)$ many clauses.
For the solution transformation $g_1$, as well as the proof that this reduction is indeed correct we refer to the proof in Ref.~\cite{kearns_boolfunc_1993}.
%
It is important to note that by the construction above, the bits of the examples are now encoded in the clauses of $F_S$ together with the correct working of the DFA and the solution (the structure of the minimal DFA) is the minimal colouring of $F_S$.
Due to the results of Ref.~\cite{kearns_boolfunc_1993}, 
the following theorem holds:

\begin{theorem}[Reduction of $Con(\DFA, \DFA)$ to $\FC$ \cite{kearns_boolfunc_1993}]
    \label{theorem:dfa_fs_colouring}
    There is a polynomial time algorithm $\tau_{4}$ that on input an instance $S$ of the problem $Con(\DFA, \DFA)$ outputs an instance $F_S$ of the \emph{formula colouring problem} such that $S$ has a $k$-state consistent hypothesis $M \in \DFA$ if and only if $F_S$ has a colouring $P$, with $|P| = k$.
\end{theorem}
Note that the algorithm $\tau_4$ is precisely the instance transformation of the reduction $(\tau_4, g_1)$ and we have:
\begin{align}
    Con(\DFA, \DFA) \leq_p \FC .\label{eq:red4}
\end{align}
In particular, it holds that
\begin{align}
    Con(\DFARSA, \DFA) \leq_p \FCRSA \label{eq:red5},
\end{align}
where $\FCRSA$ is the class of \emph{formula colouring problems} that result out of running $\tau_4$ \ntext{(introduced in section \ref{subseq:classhard_fc})} on the instances in the problem $\Con(\DFARSA, \DFA)$.
In particular, $g_1$ transforms the minimal solution of $\FC$ into the minimal solution of $\Con(\DFA,\DFA)$, thus $opt_{\FC}(F_S) = opt_{\Con}(S)$ (due to Theorem \ref{theorem:dfa_fs_colouring}) and $|F_S| = \Theta(|S|^2)$.
From those two facts, it follows that finding a valid colouring $P$ of $F_S$, such that $|P| \leq opt_{FC} (F)^{\alpha} |F|^{\beta'}$ would contradict Theorem \ref{theorem:classical_hardness_more}, for the parameter range $\alpha \geq 1$, $0 \leq \beta' < 1/2$.
Thus, the reduction $(\tau_4,g_1)$ preserves the approximation 
hardness of $\Con(\DFARSA, \DFA)$ in the sense of the following theorem \cite{kearns_boolfunc_1993}:
\begin{theorem}[Classical hardness of approximation for \emph{formula colouring} \cite{kearns_boolfunc_1993}]
        \label{theo:classical_hardness_fcrsa}
        Assuming the hardness of inverting the RSA function,
        there exists no classical probabilistic polynomial-time algorithm
        that on input an instance $F_S$ of $\FCRSA$ finds a valid colouring $P$ that approximates the size $opt_{\FC}(F_S)$ of the optimal solution by
        \begin{equation}
            |P| \leq opt_{\FC} (F)^{\alpha} |F|^{\beta}
        \end{equation}
        for any $\alpha \geq 1$ and $0 \leq \beta < 1/2$.
\end{theorem}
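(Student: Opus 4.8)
The plan is to derive Theorem~\ref{theo:classical_hardness_fcrsa} by composing the already-established approximation hardness of $\Con(\DFARSA, \DFA)$ (Theorem~\ref{theorem:classical_hardness_more}, part (c)) with the reduction $(\tau_4, g_1)$ of Theorem~\ref{theorem:dfa_fs_coloring}, transferring non-approximability across the reduction via a contrapositive argument. The essential structural facts I would invoke are the two identities already singled out in the text: first, that the reduction preserves optimal values, $opt_{\FC}(F_S) = opt_{\Con}(S)$, which follows directly from the ``if and only if'' correspondence in Theorem~\ref{theorem:dfa_fs_coloring} between $k$-state consistent DFAs and $k$-colorings; and second, that the instance blows up only quadratically, $|F_S| = \Theta(|S|^2)$, as recorded in the construction of $\tau_4$.

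\textbf{Proof sketch.} I would argue by contradiction. Suppose, toward a contradiction, that for some fixed $\alpha \geq 1$ and $0 \leq \beta < 1/2$ there exists a classical probabilistic polynomial-time algorithm $\alga$ that, on input any instance $F_S \in \FCRSA$, outputs a valid coloring $P$ with
\begin{equation}
    |P| \leq opt_{\FC}(F_S)^{\alpha}\, |F_S|^{\beta}.
\end{equation}
I would then build from $\alga$ a polynomial-time algorithm that violates Theorem~\ref{theorem:classical_hardness_more}(c). Given an instance $S$ of $\Con(\DFARSA, \DFA)$, the combined algorithm runs $\tau_4$ to produce $F_S$ in polynomial time (this is possible because $\tau_4$ is polynomial and $|F_S| = \Theta(|S|^2)$), then applies $\alga$ to obtain a valid coloring $P$, and finally applies the solution transformation $g_1$ to recover a DFA hypothesis $h = g_1(P) \in \DFA$ consistent with $S$. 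By the correctness of the reduction $(\tau_4, g_1)$, the recovered $h$ is consistent with $S$ and satisfies $|h| = |P|$ (the coloring's number of colors is exactly the number of states of the DFA).

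The heart of the argument is the arithmetic of the approximation factor. Substituting $opt_{\FC}(F_S) = opt_{\Con}(S)$ and $|F_S| = \Theta(|S|^2)$ into the assumed bound yields
\begin{equation}
    |h| = |P| \leq opt_{\Con}(S)^{\alpha}\, \bigl(\Theta(|S|^2)\bigr)^{\beta} = opt_{\Con}(S)^{\alpha}\, \Theta\!\left(|S|^{2\beta}\right).
\end{equation}
Setting $\beta' = 2\beta$, the condition $0 \leq \beta < 1/2$ translates exactly into $0 \leq \beta' < 1$, so the constructed algorithm would approximate $opt_{\Con}(S)$ for $\Con(\DFARSA, \DFA)$ within the forbidden factor $opt_{\Con}(S)^{\alpha-1}|S|^{\beta'}$, contradicting Theorem~\ref{theorem:classical_hardness_more}(c). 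This establishes the claimed non-approximability for $\FCRSA$.

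\textbf{The main obstacle} I anticipate is not the logical skeleton, which is a routine contrapositive, but rather justifying cleanly that the reduction $(\tau_4, g_1)$ is genuinely \emph{approximation-preserving} in both directions: I must verify that $g_1$ maps an arbitrary \emph{valid} (not merely optimal) coloring $P$ of $F_S$ back to a consistent DFA of exactly $|P|$ states, so that a near-optimal coloring yields a near-optimal DFA with the sizes matching up. This requires unpacking the correctness of $\tau_4$ beyond the optimal-value statement of Theorem~\ref{theorem:dfa_fs_coloring}---in particular checking that every coloring satisfying the two clause types ($(z_{i_1}^{j_1} = z_{i_2}^{j_2}) \rightarrow (z_{i_1}^{j_1+1} = z_{i_2}^{j_2+1})$ and $z_{i_1}^{k} \neq z_{i_2}^{k}$ when $b_{i_1} \neq b_{i_2}$) induces a well-defined consistent DFA whose state count equals the number of colors---and for this full verification I would defer to the detailed construction of $g_1$ given in Ref.~\cite{kearns_boolfunc_1993}. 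The only subtlety requiring care on our side is tracking the exponent on $|S|$ through the quadratic blow-up, which is precisely what forces the tighter range $\beta < 1/2$ in the formula coloring statement compared with the range $\beta < 1$ available for $\Con$.
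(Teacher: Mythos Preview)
Your proposal is correct and follows essentially the same route as the paper: a contrapositive argument that composes the reduction $(\tau_4,g_1)$ with a hypothetical $\FCRSA$ approximator, invoking the two key facts $opt_{\FC}(F_S)=opt_{\Con}(S)$ and $|F_S|=\Theta(|S|^2)$ to translate the approximation guarantee back to $\Con(\DFARSA,\DFA)$ and thereby contradict Theorem~\ref{theorem:classical_hardness_more}(c). Your treatment is in fact more careful than the paper's brief justification, since you explicitly flag the need for $g_1$ to map \emph{arbitrary} valid colorings (not just optimal ones) to consistent DFAs of matching size and correctly defer that verification to Ref.~\cite{kearns_boolfunc_1993}.
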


In a similar mindset, we present an approximation preserving reduction of $\FC$ to the \textit{integer linear programming problem} in the subsequent section.

\subsubsection{Approximation hardness of integer linear programming}
\label{subsec:classhard_ilp}
In this section, we show an approximation-preserving reduction of the \emph{formula colouring problem} to the problem of \emph{integer linear programming}.
ILP is an NP-complete problem in which many practically relevant 
combinatorial optimization tasks are formulated, such as planning or scheduling tasks 
\cite{wolsey_integer_1999}.
The problem is to minimize (or maximize) an objective function that depends on integer variables.
Additionally, there are constraints on the variables that need to be followed.
Let us define an ILP problem within our formalism:
\begin{definition}[Integer linear programming problem ($\ILP$)]
    \label{def:ilp}
    {\color{white}.}\\
    \textbf{Instance} An \ntext{linear} objective function over integer variables subject to \ntext{linear} constraints of the variables.\\
    \textbf{Solution} A valid assignment of the variables under the constraints, such that the objective function is minimal.
\end{definition}
We now show the reduction $(\tau_5, g_2)$ of \emph{formula colouring} to ILP, by first giving the instance transformation $\tau_5$:

Let $F(z_1, \dots, z_M)$ be a formula colouring instance over variables $z_1, \dots, z_M \in \mathbb{N}$ which is a conjunction of $Q$ clauses of the form $(z_u \neq z_v)$ and $R$ clauses of the form $((z_u = z_v) \rightarrow (z_k = z_l))$ (which is equivalent to $((z_u = z_v) \lor (z_k \neq z_l))$).
For $1 \leq u,i \leq M$ and $1 \leq j \leq R$ we introduce the ILP variables $w_i, x_{u,i}, a_j, b_j, s_j \in \{0,1\}$ and $1 \leq \hat{z}_u \leq M$, where $\hat{z}_u$ resembles the variable $z_u$ in $F$ and $w_i$ indicates if the $i$'th colour is used and $x_{u,i}$ indicates if the variable $\hat{z}_u = i$ and $a_j,b_j,s_j$ are helper variables.

It is important to note that for some $k$-colouring $P=\{P_1,\dots,P_k\}$ of $F$, the clause $(z_u = z_v)$ in $F$ is true iff $z_u,z_v \in P_i$ for some colour $i$.
On the other hand, the clause $(z_u \neq z_v)$ in $F$ is true iff $z_u \in P_i$ and $z_v \centernot\in P_i$ for some colour $i$.
In our ILP construction, we introduce an analogue variable to $z_u$, namely $\hat{z}_u$, where $\hat{z}_u$ directly takes as value the colour $i$, i.e., $\hat{z}_u = i$ iff $z_u \in P_i$.

By our construction, we get the \emph{integer linear programming} problem $\ILP_F$
\begin{equation}
    \text{minimize}  \displaystyle\sum\limits_{1 \leq i \leq M}  w_i 
\end{equation}
subject to the following constraints,
\begin{align}
        \label{eq:ilp1} &\text{for all $u,i \in \{1, \dots, M\}$,} & (x_{u,i} = 1) \Longleftrightarrow (\hat{z}_u = i) ,\\
        \label{eq:ilp2} &\text{for all $u \in \{1, \dots, M\}$,} & \sum_{i=1}^M x_{u,i} = 1 ,\\
        \label{eq:ilp3} &\text{for all $u,i \in \{1, \dots, M\}$,} & x_{u,i} \leq w_i, \\
        \label{eq:ilp4} &\text{for all $Q$ clauses $(z_u \neq z_v)$ and all $i \in \{1, \dots, M\}$,} & x_{u,i} + x_{v,i} \leq 1 ,\\
        \label{eq:ilp5} &\text{for all $R$ clauses $((z_u \neq z_v) \vee (z_k = z_l))$ with $j \in \{1, \dots, R\}$,} & (a_j=1) \Longleftrightarrow (\hat{z}_k = \hat{z}_l) ,\\
        \label{eq:ilp6} & & (b_j =1) \Longleftrightarrow (\hat{z}_u \neq \hat{z}_v) ,\\
        \label{eq:ilp7} & & s_{j} = (a_{j} \vee b_{j}) ,\\
        \label{eq:ilp8} & & s_j \geq 1 ,\\
        \nonumber \\
		\label{eq:ilp9} & \text{and } w_i, x_{u,i}, a_j, b_j, s_j \in \{0,1\} \text{ and } 1 \leq \hat{z}_u,\hat{z}_v,\hat{z}_k,\hat{z}_l \leq M.
\end{align}
Before explaining the constraints, let us note that for the sake of understanding, we display here logical clauses in (\ref{eq:ilp1}), (\ref{eq:ilp5}), (\ref{eq:ilp6}) and (\ref{eq:ilp7}), even though they are technically not ILP constraints.
We refer to 
Section \ref{subsec:app}
on how the logical clauses in (\ref{eq:ilp1}), (\ref{eq:ilp5}), (\ref{eq:ilp6}) and (\ref{eq:ilp7}) are 
concretely converted to inequality constraints.

We define the binary variable $w_i$ to be $1$ iff colour $i$ is used.
Hence, the minimization task at hand over the $w_i's$ corresponds to finding the minimal colouring of $F$.
Constraint (\ref{eq:ilp1}) defines the binary variable $x_{u,i}$ to be $1$ iff $\hat{z}_u = i$, i.e., indicating that $z_u \in P_i$.
Constraint (\ref{eq:ilp2}) ensures that any variable is assigned to exactly one colour.
Constraint (\ref{eq:ilp3}) ensures that if there is some $z_u \in P_i$, then $w_i = 1$, since colour $i$ is used.
Constraint (\ref{eq:ilp4}) encodes the $(z_u \neq z_v)$ clauses in $F$, i.e., that $z_u,z_v$ are not assigned the same colour.
Constraints (\ref{eq:ilp5}), (\ref{eq:ilp6}), (\ref{eq:ilp7}) and (\ref{eq:ilp8}) encode the $((z_u \neq z_v) \vee (z_k = z_l))$ clauses in $F$.

In total, we get $M(4M+Q+1)+12R$ constraints and $2M(M+1)+5R$ variables, which are polynomial in the size of $F$.
Thus $\tau_5$ is indeed computable in polynomial time.
Now the solution transformation $g_2$ simply works by partitioning the variables $z_u, z_v$ into the same set iff $\hat{z}_u = \hat{z}_v$. Clearly, $g_2$ is computable in polynomial time.
We show that $(\tau_5, g)$ is indeed a reduction of $\FC$ to $\ILP$ by proving an even stronger result:

\begin{theorem}[Reduction of FC to ILP]\label{theo:ilp_reduction}
    Let $\tau_5$ be a polynomial-time algorithm that on input an instance $F(z_1,\dots,z_M)$ of the 
    \emph{formula colouring problem} $\FC$ outputs an instance $\ILP_F$ of the \emph{integer linear 
    programming} problem.
    Let $g_2$ be a polynomial-time algorithm that on input an assignment $A$ of $\ILP_F$ outputs a colouring $P$ of $F$.
    There exist $\tau_5, g_2$, such that $P$ is a valid $k$-colouring of $F$ if and only if $A$ is a valid assignment of the variables in $\ILP_F$ such that the objective function of $\ILP_F$ is $k$.
\end{theorem}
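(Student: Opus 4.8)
The plan is to prove the biconditional by exhibiting the two halves of the correspondence between colorings of $F$ and feasible assignments of $\ILP_F$, and by tracking how the objective value $\sum_i w_i$ records the number of colors used. Since the constructions of $\tau_5$ and $g_2$ are already fixed above, what remains is essentially a verification: feasibility of an assignment $A$ is equivalent to $F$ being satisfied by the induced coloring $P = g_2(A)$, and the objective value of $A$ matches $|P|$. I would organize this as a forward direction (coloring $\Rightarrow$ assignment) and a backward direction (assignment $\Rightarrow$ coloring), treating the clause-encoding constraints and the color-counting constraints separately.

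For the forward direction, given a valid $k$-coloring $P = \{P_1, \dots, P_k\}$ of $F$ I would define the canonical assignment: set $\hat{z}_u = i$ whenever $z_u \in P_i$, set $x_{u,i} = \indexf(\hat{z}_u = i)$, set $w_i = 1$ iff $P_i$ is nonempty, and fix each helper triple $a_j, b_j, s_j$ by its defining equivalence in (\ref{eq:ilp5})--(\ref{eq:ilp7}). Then I would check the constraint families one at a time: (\ref{eq:ilp1}) and (\ref{eq:ilp2}) hold because $P$ is a partition, so each $z_u$ lies in exactly one class; (\ref{eq:ilp3}) holds because a class containing some $z_u$ is by definition used; (\ref{eq:ilp4}) holds because validity of $P$ forbids $z_u, z_v$ sharing a color for every $(z_u \neq z_v)$ clause; and (\ref{eq:ilp5})--(\ref{eq:ilp8}) hold because each conditional clause $((z_u \neq z_v) \vee (z_k = z_l))$ is satisfied, so at least one of $b_j$ (from $\hat{z}_u \neq \hat{z}_v$) or $a_j$ (from $\hat{z}_k = \hat{z}_l$) equals $1$, forcing $s_j = a_j \vee b_j \geq 1$. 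Counting used colors then yields objective $\sum_i w_i = |P| = k$.

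For the backward direction, given a feasible $A$ I would first argue that $g_2$ produces a well-defined coloring: the relation $z_u \sim z_v \iff \hat{z}_u = \hat{z}_v$ is an equivalence relation, so $P = g_2(A)$ is a genuine partition. I would then show $F$ is satisfied: constraint (\ref{eq:ilp4}) together with (\ref{eq:ilp1})--(\ref{eq:ilp2}) guarantees every $(z_u \neq z_v)$ clause holds, while (\ref{eq:ilp5})--(\ref{eq:ilp8}) force $a_j \vee b_j = 1$ and hence that every conditional clause holds. It then remains only to match the count, i.e. to conclude that the objective equals $|P|$.

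The main obstacle is precisely this last counting step. Constraints (\ref{eq:ilp1}) and (\ref{eq:ilp3}) only force $w_i = 1$ for \emph{used} colors, and nothing forces $w_i = 0$ for an unused color; thus a feasible assignment can inflate its objective by switching on spurious $w_i$, giving only $|P| \leq \sum_i w_i = k$ rather than equality. I would resolve this by observing that for the minimization at hand one may assume without loss of generality that $w_i = 1$ exactly when color $i$ is used, since lowering any spurious $w_i$ to $0$ preserves feasibility and does not increase the objective; this pins $\sum_i w_i = |P|$ and makes the biconditional tight on the solutions that matter. Combined with the forward direction this yields $opt_{\ILP}(\ILP_F) = opt_{\FC}(F)$, the equality actually needed downstream for approximation preservation. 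Finally I would record that $\tau_5$ and $g_2$ run in polynomial time (the instance has $M(4M+Q+1)+12R$ constraints and $2M(M+1)+5R$ variables, polynomial in $|F|$), so the pair is a genuine polynomial-time reduction.
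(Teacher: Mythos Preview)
Your approach is essentially the same as the paper's: both directions use the same canonical variable assignment in the forward direction and the same partition-by-$\hat z$-value in the backward direction, with constraint families checked one by one. One point worth noting is that you are actually more careful than the paper on the counting step. The paper's backward argument asserts that from $\sum_i w_i = k$ together with $\sum_i x_{u,i}=1$ and $x_{u,i}\le w_i$ one obtains, for each of the $k$ indices $i_\ell$ with $w_{i_\ell}=1$, some $u_\ell$ with $x_{u_\ell,i_\ell}=1$; but the constraints only force $w_i=1$ whenever color $i$ is \emph{used}, not conversely, so this inference is unjustified and the literal biconditional can fail for assignments with spurious $w_i$'s. Your without-loss-of-generality observation (drop any unused $w_i$ to $0$; feasibility is preserved and the objective does not increase) is the clean way to close this, and it delivers exactly what is needed downstream, namely $opt_{\ILP}(\ILP_F)=opt_{\FC}(F)$.
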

\begin{proof}
    Let $\tau_5$ and $g_2$ be the algorithms described in the beginning of this section.\\
    \underline{$\Longrightarrow$:}
    We first prove that if $F$ has a valid colouring $P$ of $k$ colours, then there exists an assignment $A$ of the variables such that 
    \begin{equation}
    \sum\limits_{1 \leq i \leq M}  w_i = k.
    \end{equation}
    Without loss of generality, assume an ordering of the sets in $P = \{P_1,\dots,P_k\}$.
    Since $P$ is a colouring of $F$, the $P_i$'s are pairwise disjoint. We assign the variables in $\ILP_F$ as follows.
    \begin{align}
        &\text{For all $i \in \{1,\dots,M\}$, } &w_i = \indexf(i \leq k),\\
        &\text{for all $u \in \{1,\dots,M\}$,} & \hat{z}_u = \sum_{i=1}^M \indexf(z_u \in P_i) \times i ,\\
        &\text{for all $u,i \in \{1,\dots,M\}$,} & x_{u,i} = \indexf(z_u \in P_i),\\
        &\text{for all $R$ clauses $((z_u \neq z_v) \vee (z_k = z_l))$ with $j \in \{1, \dots, R\}$,} & a_j = \indexf(\hat{z}_k = \hat{z}_l),\\
        & & b_j = \indexf(\hat{z}_u \neq \hat{z}_v),\\
        & & s_j = a_j + b_j
        .
    \end{align}
    Clearly, the objective function of $\ILP_F$ is $k$.
    It remains to be shown that the constraints in $\ILP_F$ are satisfied.
    First, note that from the variable assignments it follows that $(\hat{z}_u = i) \Longleftrightarrow (\indexf(z_u \in P_i) = 1)$.
    We can then see that the constraint (\ref{eq:ilp1}) is satisfied, since
    \begin{equation}
        (x_{u,i} = 1) \Longleftrightarrow (\indexf(z_u \in P_i) = 1) \Longleftrightarrow (\hat{z}_u = i) .
    \end{equation}
    The constraint (\ref{eq:ilp2}) is satisfied, due to the pairwise disjointedness of the sets in $P$ and we get
    \begin{equation}\sum_{i=1}^M x_{u,i} = \sum_{i=1}^M \indexf(z_u \in P_i) = 1.
    \end{equation}
    Next, we turn our attention to constraint (\ref{eq:ilp3}). To see why this constraint is satisfied observe the following. From the fact that $\sum_{i=1}^M x_{u,i} = 1$ it follows that there is exactly one $i'$, for which $x_{u,i'} = 1$. By the definition of $x_{u,i'}$, we have $\indexf(z_u \in P_{i'}) = 1$.
    Since $P =\{P_1,\dots,P_k\}$ and $P_{i'}$ is not empty, it must hold that $i' \leq k$ and hence by construction $w_{i'} = 1$. For all other $i \neq i'$, we have $x_{u,i} = 0$, and thus $x_{u,i} \leq w_i$ and constraint (\ref{eq:ilp3}) is satisfied.
    The constraint (\ref{eq:ilp4}) is satisfied, since we have
    \begin{equation}
     x_{u,i} + x_{vi} = \indexf(z_u \in P_i) + \indexf(z_v \in P_i) \leq 1
     \end{equation}
    because of the assumption that $P$ is a valid colouring and this constraint occurs only for clauses of the form $(z_u \neq z_v)$.
    The constraints (\ref{eq:ilp5}) and (\ref{eq:ilp6}) are satisfied by definition.
    One can easily see that (\ref{eq:ilp7}) and (\ref{eq:ilp8}) are also satisfied, since $P$ is a valid colouring and these constraints only occur for clauses of the form $((z_u \neq z_v) \vee (z_k = z_l))$.\\
    
    \underline{$\Longleftarrow$:}
    Assume that we are given a valid assignment $A$ of the variables in $\ILP_F$, such that $\sum_{1 \leq i \leq M} w_i = k$.
    Then, we can construct a valid colouring $P$ for the corresponding formula colouring instance $F$.
    To this end, run $g_2$ by partitioning the variables $z_u, z_v$ into the same sets iff $\hat{z}_u = \hat{z}_v$.
    Since 
    \begin{equation}\sum_{1 \leq i \leq M} w_i = k,
    \end{equation}
    there exist $i_1,\dots,i_k$ for which $w_{i_1},\dots, w_{i_k} = 1$.
    Since for all $u \in \{1,\dots,M\}$ we have 
    \begin{equation}\sum_{i=1}^M x_{u,i} = 1 
    \end{equation}
    and $x_{u,i} \leq w_i$, there exist $u_1,\dots,u_k$ for which $x_{{u_1},{i_1}}, \dots, x_{{u_k},{i_k}} = 1$.
    Therefore, since the $u_1,\dots,u_k$ are pairwise different and $i_1,\dots,i_k$ are pairwise different and because $(x_{u,i} = 1) \Longleftrightarrow (\hat{z}_u = i)$, there are $\hat{z}_{u_1}=i_1, \dots, \hat{z}_{u_k}=i_k$ that are different from each other.
    Therefore, if we partition variables $z_u, z_v$ into the same partition iff $\hat{z}_u = \hat{z}_v$, we obtain exactly $k$ partitions.
    Now we need to show that this colouring is a valid colouring for $F$.
    The clauses $(z_u \neq z_v)$ are satisfied since constraints (\ref{eq:ilp4}) and (\ref{eq:ilp1}) are satisfied.
    The clauses $((z_u \neq z_v)\lor(z_k = z_l))$ are satisfied since constraints (\ref{eq:ilp5}), (\ref{eq:ilp6}), (\ref{eq:ilp7}), (\ref{eq:ilp8}) are satisfied. This ends the proof of the reduction.
\end{proof}


Thus, we have that
\begin{equation}
    \FC \leq_p \ILP \label{eq:red6}
\end{equation}
and in particular
\begin{equation}
    \FCRSA \leq_p \ILPRSA,\label{eq:red7}
\end{equation}
where $\ILPRSA$ are the instances of $\ILP$ that we get when we apply $\tau_5$ to all instances of $\FCRSA$.
Since, by the same arguments as in Section~\ref{subseq:classhard_fc} and since $g_2$ transforms the minimal solution of $\ILP$ to the minimal solution of $\FC$ and $|\ILP_F| = \Theta(|F|^2)$,
the reduction $(\tau_5,g_2)$ preserves the approximation hardness of $\FCRSA$, in the sense of
the following theorem.

\begin{theorem}[Classical hardness of approximation for \textit{integer linear programming}]
        \label{theo:classical_hardness_ilprsa}
        Assuming the hardness of inverting the RSA function,
        there exists no classical probabilistic polynomial-time algorithm
        that on input an instance $\ILP_F$ of $\ILPRSA$ finds an assignment of the variables in $\ILP_F$ which satisfies all constraints and approximates the size $opt_{\ILP}(\ILP_F)$ of the optimal solution by
        \begin{equation}
            \sum_{1 \leq i \leq M}  w_i \leq opt_{\ILP} (\ILP_F)^{\alpha} |\ILP_F|^{\beta}
        \end{equation}
        for any $\alpha \geq 1$ and $0 \leq \beta < 1/4$.
\end{theorem}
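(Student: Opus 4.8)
The plan is to argue by contradiction, composing a hypothetical approximation algorithm for $\ILPRSA$ with the reduction $(\tau_5,g_2)$ of Theorem~\ref{theo:ilp_reduction} to obtain an approximation algorithm for $\FCRSA$ that is forbidden by Theorem~\ref{theo:classical_hardness_fcrsa}. This mirrors exactly the way Theorem~\ref{theo:classical_hardness_fcrsa} was obtained from Theorem~\ref{theorem:classical_hardness_more} in Section~\ref{subseq:classhard_fc}, so the only genuinely new content is the exponent bookkeeping induced by the quadratic instance blow-up. Concretely, I would assume there is a classical probabilistic polynomial-time algorithm $\mathcal{B}$ that, on input $\ILP_F$, returns a feasible assignment $A$ whose objective value obeys $\sum_i w_i \leq opt_{\ILP}(\ILP_F)^{\alpha}\,|\ILP_F|^{\beta}$ for some $\alpha \geq 1$ and $0 \leq \beta < 1/4$, and define the composed algorithm $\mathcal{A}(F) := g_2(\mathcal{B}(\tau_5(F)))$.

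First I would verify that $\mathcal{A}$ is admissible and correct: $\tau_5$ and $g_2$ run in polynomial time (Theorem~\ref{theo:ilp_reduction}) and $\mathcal{B}$ is polynomial time by assumption, so $\mathcal{A}$ is a classical probabilistic polynomial-time algorithm; and since $\mathcal{B}$ returns a feasible assignment $A$ with objective value $k=\sum_i w_i$, the ``if'' direction of Theorem~\ref{theo:ilp_reduction} guarantees that $P=g_2(A)$ is a valid $k$-coloring of $F$. Hence $\mathcal{A}$ always outputs a feasible coloring with $|P| = \sum_i w_i$.

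The heart of the argument is then to translate the ILP guarantee into an FC guarantee using the two structural facts recorded just before the statement: $opt_{\ILP}(\ILP_F) = opt_{\FC}(F)$ (which follows from the biconditional of Theorem~\ref{theo:ilp_reduction}, so that minimal solutions correspond) and $|\ILP_F| = \Theta(|F|^2)$. Substituting these gives
\begin{equation}
|P| = \sum_i w_i \leq opt_{\FC}(F)^{\alpha}\,|\ILP_F|^{\beta} \leq c\,opt_{\FC}(F)^{\alpha}\,|F|^{2\beta},
\end{equation}
where $c$ absorbs the multiplicative constant from the $\Theta(|F|^2)$ upper bound. Because $\beta < 1/4$, one can fix $\beta'$ with $2\beta < \beta' < 1/2$, and for all sufficiently large $|F|$ the constant is swallowed, $c\,|F|^{2\beta} \leq |F|^{\beta'}$, so that $|P| \leq opt_{\FC}(F)^{\alpha}|F|^{\beta'}$ with $\beta' < 1/2$.

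This final inequality exhibits $\mathcal{A}$ as a classical polynomial-time approximation for $\FCRSA$ within the range $\alpha \geq 1$, $0 \leq \beta' < 1/2$ that Theorem~\ref{theo:classical_hardness_fcrsa} rules out under the hardness of inverting RSA, giving the contradiction. I expect the main obstacle to be purely this exponent bookkeeping, namely making precise that the quadratic blow-up $|\ILP_F| = \Theta(|F|^2)$ is exactly what halves the admissible range of the compression parameter from $1/2$ to $1/4$, and checking that the multiplicative constant together with the finitely many small-instance cases can be absorbed without weakening the contradiction. The latter is unproblematic since the hardness statements are asymptotic, so deriving the contradiction for all sufficiently large instances suffices.
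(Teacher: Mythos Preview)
Your proposal is correct and follows precisely the approach the paper sketches: the paper's justification is the single sentence preceding the theorem, invoking the same two facts you use (that $g_2$ preserves optimal values, so $opt_{\ILP}(\ILP_F)=opt_{\FC}(F)$, and that $|\ILP_F|=\Theta(|F|^2)$) together with ``the same arguments as in Section~\ref{subseq:classhard_fc}.'' Your write-up is in fact more explicit than the paper's, spelling out the contradiction via the composed algorithm $g_2\circ\mathcal{B}\circ\tau_5$ and carrying out the exponent halving $1/2\mapsto 1/4$ and constant absorption that the paper leaves implicit.
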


To give a high-level overview of the hardness results established in this section, we present 
in Fig.~\ref{fig:hardness_argumentchain} the chain of implications.

\begin{figure*}[t]
    \centering
    \includegraphics[scale=.7]{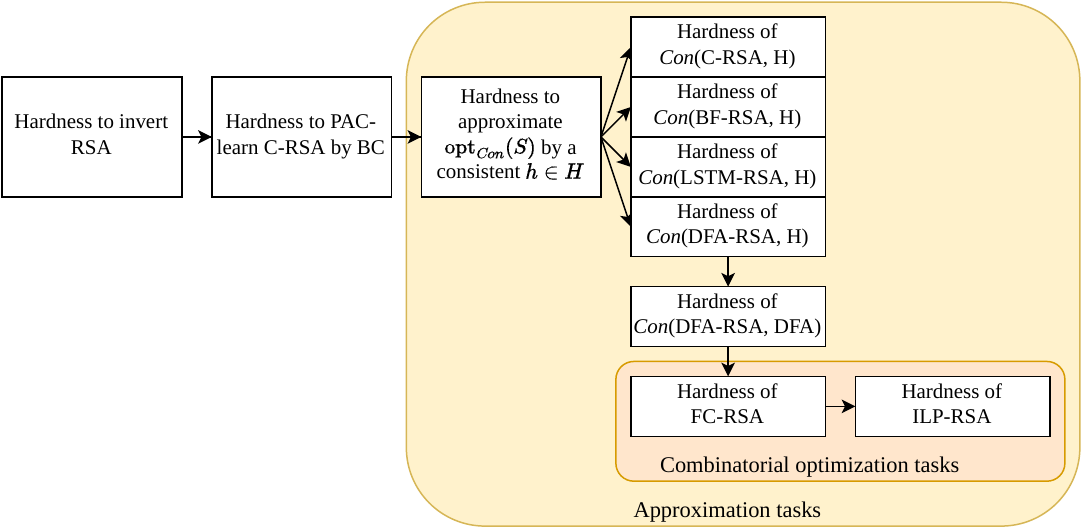}
    \caption{\textbf{The argument chain that propagates the hardness to invert the RSA function to the hardness of approximating combinatorial optimization tasks.}}
    \label{fig:hardness_argumentchain}
\end{figure*}

\subsection{Quantum efficiency} \label{sec:quantum_efficiency}
In the previous section we have presented proofs for the classical hardness of various approximation tasks.
In this section, we turn to showing a quantum advantage by proving that the instances 
resulting from the reductions described in Section \ref{section:classical_hardness} can be solved in polynomial time given access to a fault-tolerant quantum computer.
This yields the desired result of \textit{quantum separation} for natural problems: Under the assumption that inverting the RSA function is hard, quantum computers can find close to optimal solutions to problem instances for which classical computers are incapable of findings solutions of the same quality.

First, we demonstrate that the solutions to instances of $\Con(\CRSA, \BC)$ can be approximated by a polynomial factor in quantum polynomial time leveraging Shor's algorithm.
Later, we show approximation separation results for more ``natural'' problems, namely \emph{formula colouring} and \emph{integer linear programming}.

\begin{theorem}[Quantum efficiency for approximating the solution of $\Con(\CRSA, \BC)$]
    There exists a polynomial-time quantum algorithm that, on input of an instance $S$ of $\Con(\CRSA, \BC)$, finds a consistent hypothesis $h\in \BC$ which approximates the size $\opt(S)$ of the optimal solution by
        \begin{equation}
            |h| \leq \opt(S)^{\alpha}
        \end{equation}
    for all $S$ and for some $\alpha \geq 1$.
\end{theorem}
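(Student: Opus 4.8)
The plan is to exploit the fact that the public key $(N,e)$ is written explicitly and identically into the input of every labeled example of $S$, while the only obstruction to classically reproducing the target circuit is the missing secret exponent $d$—precisely the obstruction that Shor's algorithm removes. Concretely, I would first read $N$ and $e$ directly off the encoding $\bin(\powers_N(\RSA(x,N,e)),N,e)$ of any single example in $S$. I would then run Shor's factoring algorithm on $N$ to recover its prime factors $p,q$ in quantum polynomial time, compute $\phi(N)=(p-1)(q-1)$, and obtain the secret key $d$ with $e\cdot d\equiv 1 \pmod{\phi(N)}$ via the extended Euclidean algorithm. Every step after the quantum factoring is a classical polynomial-time computation, so the procedure as a whole runs in quantum polynomial time.

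Next I would output the hypothesis $h\in\BC$ built by hard-wiring the recovered $d$ into a decryption-and-LSB circuit: on input the power sequence of the ciphertext $y=\RSA(x,N,e)$, the circuit multiplies together exactly those squared powers $y^{2^i}\bmod N$ for which the $i$-th bit of $d$ equals $1$, thereby computing $y^d\bmod N=x$, and returns $\lsb(x)$. By the remark following Definition~\ref{definition:rsa_learning_problem} together with the iterated-product construction of Ref.~\cite{beame_iteratedproducts_1986}, this is realizable in depth $O(\log n)$ and size ${\rm poly}(n)$, so indeed $h\in\BC$. Since $h$ computes $\lsb(x)$ for every $x\in\mathbb{Z}_N$—the very function that produced the labels of $S$—it agrees with the target representation on every example and is therefore consistent with $S$. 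This already delivers the absolute bound $|h|=O({\rm poly}(n))$.

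It remains to turn $|h|=O({\rm poly}(n))$ into the relative bound $|h|\le \opt(S)^{\alpha}$, which requires a matching polynomial \emph{lower} bound on the size of any consistent hypothesis. Here I would argue that every circuit in $\BC$ over this domain must carry one input gate per input bit, and the input length is $\Theta(n^2)$ (the power sequence contributes $\ceil{\log N}+1$ blocks of $\approx n$ bits each), whence $\opt(S)=\Omega(n^2)$. Combining $|h|\le c\,n^{k}$ with $\opt(S)=\Omega(n^2)$ then gives $|h|\le \opt(S)^{\alpha}$ for a single fixed constant $\alpha$ (any $\alpha>k/2$ suffices for large $n$) independent of $S$, which is exactly the claimed approximation guarantee.

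I expect the main obstacle to be precisely this last step: making the lower bound on $\opt(S)$ airtight without appealing to an (unavailable) circuit lower bound. The counting argument above is the safe route, since it uses only that a circuit must read its input and therefore sidesteps any hardness-of-computation assumption; one must simply fix the size convention so that it charges for input gates, or equivalently verify that a consistent hypothesis must depend on polynomially many input coordinates. The quantum factoring and the consistency of $h$, by contrast, are routine once $d$ has been recovered, so the technical weight of the theorem rests on pinning down this size comparison.
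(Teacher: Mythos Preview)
Your proposal is correct and follows essentially the same approach as the paper: read $(N,e)$ from a single example, factor $N$ with Shor's algorithm, recover $d$ via the extended Euclidean algorithm, and output the log-depth iterated-product decryption circuit with $d$ hard-wired. The only difference is in the final size comparison: the paper simply asserts ``it is clearly the case that $n \leq \opt(S)$'' and concludes $|h| \leq n^{\alpha} \leq \opt(S)^{\alpha}$, whereas you argue more carefully via input-gate counting that $\opt(S)=\Omega(n^2)$---your version is in fact the more scrupulous treatment of the step the paper glosses over.
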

\begin{proof}
    Let $S$ be an instance of $\Con(\CRSA, \BC)$.
    \begin{algorithm}
        \caption{Approximate the solution of $\Con(\CRSA, \BC)$}
        \label{alg:crsasolver}
        \Input{A labeled sample $S$ of $\CRSA$}
        \Output{The description of a Boolean circuit consistent with $S$}
        \BlankLine
        Pick any example $s \in S$ and read $e, N$ from it\;
        Run \emph{Shor's algorithm} \citep{shor_factoring_1994} to factor $N$ and retrieve $p$ and $q$\;
        Run the extended Euclidean algorithm to compute $d$, such that $d \times e = 1 \mod (p-1)(q-1)$\;
        \tcp{Note that at this point, $d$ is the secret RSA exponent.}
        Output the description of a Boolean circuit that, on input $\bcinput$,  multiplies the $2^i$\ntext{-}th powers together for which the bit $d_i=1$ (thereby hard-wiring $d$ into the circuit), using the iterated products technique \citep{beame_iteratedproducts_1986} and 
        outputs the LSB of the result.
    \end{algorithm}
    In Algorithm \ref{alg:crsasolver}, on input $S$, a hypothesis circuit $h$ is output, which is of size ${\rm poly}(n)$ and which explicitly decrypts a RSA ciphertext given by its power series.
    We know from Section \ref{subsec:bool_circuits} that $h$ is consistent with $S$ and of polynomial size.
    It is clearly the case that $n \leq \opt(S)$ and thus it holds that 
    \begin{equation}
    |h| \leq n^{\alpha} \leq \opt(S)^{\alpha}.
    \end{equation}
\end{proof}
Contrasted with the explicit approximation hardness from Theorem \ref{theorem:classical_hardness}, this yields the super-polynomial advantage of quantum algorithms over classical algorithms for the specific approximation task, namely approximating the optimal consistent hypothesis size by $|h|$ with $h$ consistent with $S$.
We can indeed obtain similar results also for $\Con(\BFRSA, \BF)$, $\Con(\LSTMRSA, \LSTM)$ and $\Con(\DFARSA, \DFA)$.
In particular, given $S$, we can use Algorithm \ref{alg:crsasolver} to obtain a consistent $h$ of $\CRSA$ and then leverage the poly-time instance transformations $\tau_1, \tau_2, \tau_3$, to obtain an at most ${\rm poly}(n)$ larger approximation to the solution of $\Con(\BFRSA, \BF)$, $\Con(\LSTMRSA, \LSTM)$ and $\Con(\DFARSA, \DFA)$. Thus, we obtain the following corollary:

\begin{corollary}[Quantum efficiency for more approximation tasks]
    There exists a polynomial-time quantum algorithm that, on input an instance $S$ of (a) $\Con(\BFRSA, \BF)$, (b)
    $\Con(\LSTMRSA, \LSTM)$, or (c) $\Con(\DFARSA, \DFA)$, finds a consistent hypothesis (a) $h \in \BF$, (b) $h \in \LSTM$, (c) $h \in \DFA$ which approximates the size $\opt(S)$ of the optimal solution by
        $$|h| \leq opt_{Con}(S)^{\alpha}$$
    for all $S$ and for some $\alpha \geq 1$.
\end{corollary}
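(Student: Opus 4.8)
The plan is to reduce the corollary to the already-proven theorem about $\Con(\CRSA, \BC)$ by leveraging the polynomial-time instance transformations $\tau_1, \tau_2, \tau_3$ established earlier in the excerpt. The key observation is that the hard instances $\BFRSA$, $\LSTMRSA$, and $\DFARSA$ are \emph{defined} precisely as the images of $\CRSA$ instances under these transformations, so the natural strategy is to first solve the underlying $\CRSA$ problem quantumly and then push the resulting hypothesis through the appropriate transformation.

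Concretely, I would proceed as follows. Given an instance $S$ of one of the three consistency problems, I first recover the associated $\CRSA$ instance: since each sample in $\BFRSA$, $\LSTMRSA$, or $\DFARSA$ was produced by applying $\tau_1$ (resp.\ $\tau_2 \circ \tau_1$, $\tau_3 \circ \tau_1$) to a sample of $\CRSA$, the parameters $N$ and $e$ remain readable from the examples. I then run Algorithm \ref{alg:crsasolver}: use Shor's algorithm to factor $N = p\cdot q$, apply the extended Euclidean algorithm to obtain the secret exponent $d$, and construct the log-depth, poly-size decryption circuit $h_{\BC} \in \BC$ that is consistent with the underlying $\CRSA$ sample. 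This circuit has size $\mathrm{poly}(n)$ and, as shown in the preceding theorem, satisfies $|h_{\BC}| \leq \opt(S)^{\alpha}$. Next, I apply the representation transformation $\tau_1$ (resp.\ $\tau_2\circ\tau_1$, $\tau_3\circ\tau_1$) to the description $h_{\BC}$ alone --- recall the convention from the excerpt that $\tau_{1,2,3}(c)$ denotes transforming only the representation description --- to obtain a hypothesis $h$ in $\BF$, $\LSTM$, or $\DFA$, respectively.

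Two facts then close the argument. First, consistency is preserved: since each $\tau_i$ produces a representation computing the \emph{same} function on the transformed inputs (as established in reductions \eqref{eq:red1}--\eqref{eq:red3}), the output $h$ labels every example of $S$ correctly and is therefore consistent with $S$. Second, the size bound is preserved up to a polynomial: because $\tau_1, \tau_2, \tau_3$ run in polynomial time, each blows up the description length by at most a $\mathrm{poly}(n)$ factor, so $|h| \leq \mathrm{poly}(|h_{\BC}|) \leq \mathrm{poly}(\opt(S)^{\alpha}) \leq \opt(S)^{\alpha'}$ for some possibly larger constant $\alpha' \geq 1$, absorbing the polynomial overhead into the exponent (using $n \leq \opt(S)$). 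The total running time is polynomial, being dominated by Shor's algorithm plus the three poly-time transformations.

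The main subtlety --- rather than a genuine obstacle --- is ensuring the size bound is stated against the \emph{correct} notion of $\opt(S)$ for each target problem. One must verify that $\opt(S)$ as measured for $\Con(\DFARSA, \DFA)$ is not artificially inflated relative to the $\CRSA$ optimum in a way that would weaken the claimed ratio; since the transformations only ever \emph{increase} minimal consistent sizes by at most a polynomial factor and we are comparing our output against the genuine optimum for the target class, the inequality $|h| \leq \opt(S)^{\alpha}$ survives with an adjusted constant. The only real care needed is bookkeeping: tracking how the exponent $\alpha$ accumulates across the composed transformations and confirming that $n \leq \opt(S)$ continues to hold so that the polynomial factors can be folded cleanly into $\opt(S)^{\alpha'}$.
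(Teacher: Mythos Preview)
Your proposal is correct and follows essentially the same approach as the paper: use Algorithm~\ref{alg:crsasolver} to produce a consistent $h_{\BC}\in\BC$ via Shor's algorithm, then push it through the poly-time transformations $\tau_1,\tau_2,\tau_3$ to obtain a hypothesis in the target class with at most a ${\rm poly}(n)$ size blow-up, which is absorbed into the exponent via $n\leq\opt(S)$. One small slip: for the DFA case you should compose all three maps, i.e., $\tau_3\circ\tau_2\circ\tau_1$, not $\tau_3\circ\tau_1$.
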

This again yields super-polynomial advantages of quantum algorithms over 
classical algorithms for approximating the optimal solution size of the consistency problem by the size of a hypothesis that is consistent with the a sample.
While this notion of approximation might seem unnatural, in the subsequent section, we turn our attention to approximating the solution of combinatorial optimization problems, for which it is natural to approximate some optimal scalar value while satisfying certain constraints.

\subsubsection{Quantum advantage for combinatorial optimization}
We now show a super-polynomial quantum advantage for approximating the solution of the combinatorial optimization task of formula colouring.
We have already established the classical approximation hardness of $\FCRSA$ in Theorem \ref{theo:classical_hardness_fcrsa} and give a polynomial-time quantum algorithm for approximating $\FCRSA$ in the proof of the following theorem.

\begin{theorem}[Quantum efficiency for $\FCRSA$]
    \label{theo:quant_eff_fcrsa}
    There exists a polynomial-time quantum algorithm that, on input \ntext{of} an instance $F_S$ of $\FCRSA$, finds a valid
    colouring $P$ such that 
    $$|P| \leq opt_{\FC}(F_S)^{\alpha}$$
    for all $F_S$ and for some $\alpha \geq 1$.
\end{theorem}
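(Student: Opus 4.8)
The plan is to lift the quantum algorithm for $\Con(\CRSA, \BC)$ through the reduction chain, combined with the $\DFA$-to-coloring correspondence of Theorem~\ref{theorem:dfa_fs_coloring}. Given an instance $F_S \in \FCRSA$, I would first recover the RSA parameters $N$ and $e$. By construction $F_S = \tau_4(S)$ for a labeled sample $S = \{(w_1,b_1),\dots,(w_m,b_m)\}$ whose inputs $w_i$ contain $\bin(\dots,N,e)$ at fixed bit positions, and given the known format of the $w_i$ the clause structure of $F_S$ (which records exactly the bit-equalities among the $w_i$) determines $N$ and $e$. With $N,e$ in hand I run Shor's algorithm to factor $N = p\cdot q$ and the extended Euclidean algorithm to obtain the secret exponent $d$ with $d\cdot e = 1 \mod (p-1)(q-1)$, exactly as in Algorithm~\ref{alg:crsasolver}.

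Knowing $d$, I build the $O(\log n)$-depth, ${\rm poly}(n)$-size decryption circuit of $\CRSA$ and push it through the instance transformations $\tau_1,\tau_2,\tau_3$ to obtain a DFA $M$ with ${\rm poly}(n)$ states that, on input $w_i$, outputs the correct label $\lsb(x_i)=b_i$; hence $M$ is consistent with $S$. The coloring $P$ is then read off from the correspondence behind Theorem~\ref{theorem:dfa_fs_coloring}: simulate $M$ on each $w_i$ and color the variable $z_i^j$ by an index for the state $M$ occupies after reading the first $j$ symbols of $w_i$. The implication clauses are satisfied because $M$ is deterministic (equal states reading equal symbols move to equal states), and the clauses $(z_{i_1}^k \neq z_{i_2}^k)$ for $b_{i_1}\neq b_{i_2}$ are satisfied because $M$ gives accepting and rejecting final states distinct indices. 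Thus $P$ is a valid coloring whose size is bounded by the number of distinct states $M$ visits on the sample, which is ${\rm poly}(n)$.

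It remains to convert $|P| = {\rm poly}(n)$ into the advertised bound $|P| \leq opt_{\FC}(F_S)^{\alpha}$. By Theorem~\ref{theorem:dfa_fs_coloring} we have $opt_{\FC}(F_S) = \opt(S)$, the state count of the smallest DFA consistent with $S$, so it suffices to establish a polynomial lower bound $\opt(S) \geq n^{c}$ for some constant $c>0$ and then take $\alpha$ large enough; this parallels the step $n \leq \opt(S)$ used in the $\Con(\CRSA,\BC)$ theorem.

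I expect this lower bound to be the main obstacle. Unlike the circuit case, where $\opt(S) \geq n$ holds trivially because a circuit syntactically contains its $n$ input gates, a consistent DFA has no such floor on its state count: a degenerate sample (say one in which all labels agree) admits a constant-size consistent DFA, for which $|P|={\rm poly}(n)$ would violate the claimed bound. The argument must therefore exploit that the relevant samples live in the Occam regime, so that $|S| = {\rm poly}(n,\tfrac{1}{\epsilon},\tfrac{1}{\delta})$ is large and, under the RSA assumption, the $\lsb$-labels are incompressible; any DFA reproducing them must then realize polynomially many distinct Myhill--Nerode classes and hence have at least $n^{c}$ states. I would isolate this counting statement as a separate lemma, since it is the one place where the construction above needs the specific structure of the RSA-derived instances rather than only Shor's algorithm.
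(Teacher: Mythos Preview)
Your construction mirrors the paper's proof closely: recover the $w_i$ from the clause structure of $F_S$ (the paper uses $\lsb(N)=1$ as an anchor bit to read off all the remaining bits), extract $N,e$, run Shor and the extended Euclidean algorithm, build the decryption circuit, push it through $\tau_1,\tau_2,\tau_3$ to a DFA, and color each $z_i^j$ by the DFA state reached after reading the first $j$ bits of $w_i$. The paper additionally runs standard DFA minimization before extracting the coloring, but explicitly remarks that this step is not needed for the proof.

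On the lower bound you flag as the main obstacle: the paper does not treat it any more carefully than you fear. It simply writes $k = |Q| \leq n^{\alpha} \leq \opt(S)^{\alpha} = opt_{\FC}(F_S)^{\alpha}$, carrying the inequality $n \leq \opt(S)$ over from the Boolean-circuit case without separate justification in the DFA setting. Your observation that a degenerate sample (e.g.\ all labels equal) admits a constant-size consistent DFA is correct, and under the theorem's ``for all $F_S$'' quantifier the paper's proof has exactly the gap you identified. Your proposed Myhill--Nerode/incompressibility route would at best establish the bound for samples of Occam size rather than for literally all $F_S$; but since the classical hardness side (Theorem~\ref{theo:classical_hardness_fcrsa}) is itself only meaningful for such samples, that restricted statement is what the quantum--classical separation actually requires. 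In short: you have reproduced the paper's argument and, in addition, put your finger on a soft spot that the paper glosses over.
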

\begin{proof}
Let us first describe how any instance $F_S$ of $\FCRSA$ looks like. The overview of the construction of $\FCRSA$ is that we started from class $\CRSA$ of log-depth poly-size Boolean circuits that explicitly decrypt an RSA ciphertext. The representation descriptions in $\CRSA$ were then transformed using $\tau_1$, $\tau_2$ and $\tau_3$ to the class $\DFARSA$.
Thus, recall that any instance $S$ of $\Con(\DFARSA, \DFA)$ is of the form
\begin{align}
    S = \left \{  \left ( \underbrace{ \bigparallel_{l=1}^{p(n)}  \bin\left(\powers_N(\RSA(x_i,N,e)),N,e\right)}_{=w_{i} \text{, of length }p(n)\times(n^2+2n)\text{ bits}}, \underbrace{LSB(x_i)}_{=b_i} \right )  \mid i=1,\dots ,m  \right \} \text{,}
\end{align}
where $\bigparallel$ is the big concatenation of binary strings.
Note that the repetition of $w_i$ $p(n)$ times comes from the reduction $\tau_3$, where for the construction of a DFA that simulates a log-space TM, the input needs to be repeated $p(n)$ times.

Now, $F_S$ is obtained by the reduction $(\tau_4, g_1)$ from Section \ref{subseq:classhard_fc} and $F_S$ is over the variables $z_i^j$, $1 \leq i \leq m$, $1 \leq j \leq p(n)\times(n^2+2n)+1$.
Recall that $z_{i}^{j}$ encodes the state the DFA is in after reading bit $j$ on input $w_{i}$.
By the construction of $F_S$, we know that for each $i_1, i_2$ and $j_1, j_2$, such that 
\begin{equation}
    0 \leq j_1, j_2 < p(n)\times(n^2+2n)+1 
\end{equation}
and $w_{i_1}^{j_1+1} = w_{i_2}^{j_2+1}$, the following predicate 
\begin{align}
    ((z_{i_1}^{j_1} = z_{i_2}^{j_2}) \rightarrow (z_{i_1}^{j_1+1} = z_{i_2}^{j_2+1})) 
\end{align}
occurs in $F_S$. Note that $z_{i}^{0}$ is the starting state of the DFA.

Consider the bit $w_{1}^{n^2+n}$ which is the least significant bit of $N$, for which we know that $LSB(N)=w_{1}^{n^2+n}=1$, since $N$ cannot be even.
We know that for all other bits $w_{i_2}^{j_2+1}$ in $S$ that are equal to $w_{1}^{n^2+n}$, there occurs a predicate
of the form 
\begin{equation}
((z_{1}^{n^2+n-1} = z_{i_2}^{j_2}) \rightarrow (z_{1}^{n^2+n} = z_{i_2}^{j_2+1})) \label{eq:fsform}
\end{equation}
in $F_S$.
Thus, by parsing $F_S$ and looking for all predicates of the form as in (\ref{eq:fsform}), we can infer all bits in $w_i$ given $F_S$, for all $i$. Thus, we can reconstruct all $w_i$'s from $F_S$. Algorithm \ref{alg:S_infer} does exactly this and
runs in time ${\rm poly}(n)$, since there are $O(|S|^2)$ many clauses in $F_S$.
\begin{algorithm}
    \caption{Infer $w_i$ given $F_S$}
    \label{alg:S_infer}
    \Input{An instance $F_S$ of $\FCRSA$, index $i\in\{1,\dots ,m\}$}
    \Output{The bit string $w_i$}
    \BlankLine
    \begin{algorithmic}
        \STATE \tcp{Initialize $w_i$ to the all $0$ string}
        \STATE $w_i \gets 0^{(p(n)\times(n^2+2n))}$\;
        \STATE \tcp{Set the first LSB(N)}
        \STATE $w_i[n^2+n] \gets 1$\;
        \STATE \tcp{Set all other bits of $w_i$ that are also $1$}
        \FOR{$((z_{1}^{n^2+n-2} = z_{i}^{j}) \rightarrow (z_{1}^{n^2+n-1} = z_{i}^{j+1}))$ in $F_S$}
            \STATE $w_i[j+1] \gets 1$\;
        \ENDFOR
        \STATE \KwRet $w_i$\;
    \end{algorithmic}
\end{algorithm}

Remember that our goal in this proof it to give a polynomial-time quantum algorithm that on input an $F_S$ finds a valid colouring of size less than $opt_{\FC}(F_S)^{\alpha}$.
At this point we have described how the instances $F_S$ look like and how we can extract the $w_i$'s from it.
After having obtained a $w_i$ from $F_S$ we read $e$ and $N$ from it and then construct the Boolean circuit $c$, by the same technique employed in Algorithm \ref{alg:crsasolver}.
It is important to note that $c$ is exactly of the form of Boolean circuits in $\CRSA$, from which we originally constructed $\FCRSA$.
When presented the input $\bin\left(\powers_N(\RSA(x_i,N,e)),N,e\right)$, $c$ outputs $LSB(x_i)$. 
We can transform $c$ into a DFA that is consistent with $S$ and then find a colouring for $F_S$ from that DFA. 
Therefor, to obtain a DFA that is consistent with $S$, we run $c$ through the instance transformations $t' =\tau_3 (\tau_2 ( \tau_1(c)))$ to obtain the DFA $t'$ which is consistent with $S$ and of size ${\rm poly}(n)$.
On input $w_i$, $t'$ accepts if $LSB(x_i)=1$ and rejects if $LSB(x_i)=0$.
Now we minimize $t'$ using the standard DFA minimization algorithm \citep{hopcroft_automata_2013} to obtain
the smallest and unique DFA $t$ which accepts the same language as $t'$ and thus is also consistent with $S$ and of minimal size.
This DFA minimization is in principle not needed for the proof, but it is a further optimization step.

We then run Algorithm \ref{alg:colouring} to obtain a colouring for $F_S$ from $t$.
The DFA $t$ consists of the set of states $Q$, the set of input symbols $\Sigma = \{0,1\}$, the set of accepting states $\omega \subseteq Q$, the start state $q_0 \in Q$, and the transition function $\lambda$ that takes as arguments a state and an input symbol and returns a state \cite{hopcroft_automata_2013}. Furthermore, without loss of generality, we fix an ordering of the states in $Q={0,\dots ,k-1}$ with $q_0 = 0$.
\begin{algorithm}
    \caption{Obtain colouring for $F_S$ from $t$}
    \label{alg:colouring}
    \Input{The inputs $w_i$ of $S$ and a DFA $t = (Q, \Sigma, \omega, q_0, \lambda)$ 
    that is consistent with $S$}
    \Output{A valid colouring for $F_S$}
    \BlankLine
    \begin{algorithmic}
        \STATE \tcp{Initialize data structure}
        \STATE $k \gets |Q|$\;
        \STATE $T \gets \texttt{Map\{Int, Set\}}$\;
        \STATE $T[0,\dots ,k-1] \gets \{\}$\;
        \STATE \tcp{Read $w_i$ bit by bit, walk through $t$ and add fill the colours}
        \FOR{$i=1$ \KwTo $m$}
            \STATE \tcp{Begin by the starting state}
            \STATE $c \gets 0$\;
            \STATE $T[c] \gets T[c] \cup \{ z_i^0 \}$\;
            \FOR{$j=1$ \KwTo $p(n) \times (n^2+2n)$}
                \STATE $c \gets \lambda(c, w_i^j)$\;
                \STATE $T[c] \gets T[c] \cup \{ z_i^j \}$\;
            \ENDFOR
        \ENDFOR
        \STATE \KwRet $\{T[0], \dots,  T[k-1]\}$\;
    \end{algorithmic}
\end{algorithm}
We can convince ourselves that the result of Algorithm \ref{alg:colouring} is indeed a valid colouring for $F_S$, since it assigns $z_{i_1}^{j_1}$ and $z_{i_2}^{j_2}$ the same colour if and only if $t$ is in the same state after reading $w_{i_1}^{j_1}$ on input $w_{i_1}$ and after reading $w_{i_2}^{j_1}$ on input $w_{i_2}$.
Therefore, a conjunct 
\begin{equation}
((z_{i_1}^{j_1} = z_{i_2}^{j_2}) \rightarrow (z_{i_1}^{j_1+1} = z_{i_2}^{j_2+1})) 
\end{equation}
cannot be violated since it appears in $F_S$ only if $w_{i_1}^{j_1+1} = w_{i_2}^{j_2+1}$ and by Algorithm \ref{alg:colouring}, if $z_{i_1}^{j_1}$ is assigned the same colour as $z_{i_2}^{j_2}$, then $z_{i_1}^{j_1+1}$ and $z_{i_2}^{j_2+1}$ have the same colour \cite{kearns_boolfunc_1993}.
Additionally, a conjunct 
\begin{equation}
(z_{i_1}^{p(n)\times(n^2+2n)} \neq z_{i_2}^{p(n)\times(n^2+2n)}) 
\end{equation}
cannot be violated since it appears only if $b_{i_1} \neq b_{i_2}$ and if $z_{i_1}^{p(n)\times(n^2+2n)}$ would be assigned the same colour as $z_{i_2}^{p(n)\times(n^2+2n)}$, then $t$ would be in the same state after reading all bits of $w_{i_1}$ and $w_{i_2}$, which is either an accepting or rejecting state, which in turn contradicts that $t$ is consistent with $S$ and $b_{i_1} \neq b_{i_2}$ \cite{kearns_boolfunc_1993}.
%
It follows that the colouring obtained through Algorithm \ref{alg:colouring} is upper bounded by $opt_{\FC}(F_S)^{\alpha}$ for some $\alpha$, since $t$ has polynomial size with the number of states given by $k = |Q| \leq n^{\alpha} \leq \opt(S)^{\alpha} = opt_{\FC}(F_S)^{\alpha}$ and Theorem \ref{theorem:dfa_fs_colouring}.
\end{proof}
Thus, due to Theorem \ref{theo:classical_hardness_fcrsa} and \ref{theo:quant_eff_fcrsa} we have the super-polynomial quantum advantage for approximating a combinatorial optimization solution. 

\ntext{It is interesting to note, that whether an instance $I$ of $\FC$ belongs to the set $\FCRSA$ can be decided in quantum polynomial-time.}
\ntext{To see why, for a given $\FC$ instance $I$, it can be decided in quantum polynomial-time whether the instance is also contained in $\FCRSA$, consider the following algorithm $\mathcal{A}$. First, $\mathcal{A}$ tries to reconstruct the RSA parameters $N$,$e$, and the ciphertext-label pairs from $I$. If these parameters cannot be reconstructed from $I$ (because it does not follow the correct structure), clearly $I \notin \FCRSA$. If $\mathcal{A}$ can reconstruct the respective parameters, then $\mathcal{A}$ constructs a $\Con(\DFA, \DFA)$ instance and then applies the described reduction chain to create an instance of $\FCRSA$. If the resulting instance matches instance $I$, clearly $I \in \FCRSA$ and can therefore be solved by algorithm \ref{alg:colouring}.}

We reuse the techniques employed above to prove the super-polynomial quantum advantage for approximating the optimal solution of an integer linear programming problem, namely $\ILP_{F_S} \in \ILPRSA$.

\begin{theorem}[Quantum efficiency for $\ILPRSA$]
    \label{theo:quant_eff_ilprsa}
    There exists a polynomial-time quantum algorithm that, on input an instance $\ILP_{F_S}$ of $\ILPRSA$, finds a variable assignment
    $A$ that satisfies all constraints and for which the objective function is bounded as
    $$\sum_{1 \leq i \leq M}w_i \leq opt_{\ILP}(\ILP_{F_S})^{\alpha}$$
    for all $\ILP_{F_S}$ and for some $\alpha \geq 1$.
\end{theorem}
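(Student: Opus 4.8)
The plan is to bootstrap off the already-established quantum algorithm for $\FCRSA$ of Theorem \ref{theo:quant_eff_fcrsa}, rather than attacking $\ILPRSA$ from scratch. The key observation is that the reduction $\tau_5$ is invertible in polynomial time on its image: the formula coloring instance $F_S$ can be \emph{read back out} of $\ILP_{F_S}$ by inspecting its constraints. Once $F_S$ is recovered, I would run the quantum algorithm of Theorem \ref{theo:quant_eff_fcrsa} to obtain a valid coloring $P$ with $|P| \leq opt_{\FC}(F_S)^{\alpha}$, and finally push $P$ forward through the variable assignment prescribed in the $\Longrightarrow$ direction of the proof of Theorem \ref{theo:ilp_reduction} to produce a valid ILP assignment whose objective value equals $|P|$.

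Concretely, the first step recovers $F_S$ from $\ILP_{F_S}$. By the construction of $\tau_5$, every clause $(z_u \neq z_v)$ of $F_S$ contributes exactly the family of constraints (\ref{eq:ilp4}) indexed by $i \in \{1,\dots,M\}$, while every clause $((z_u \neq z_v) \lor (z_k = z_l))$ contributes the block (\ref{eq:ilp5})--(\ref{eq:ilp8}) carrying a fresh helper index $j$, in which the pair $(k,l)$ is visible in the $a_j$-constraint and the pair $(u,v)$ in the $b_j$-constraint. Parsing $\ILP_{F_S}$ and reading off the indices appearing in these constraint blocks reconstructs the full conjunction $F_S$ in time polynomial in $|\ILP_{F_S}|$, since there are only polynomially many constraints. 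This recovered $F_S$ is, by definition of $\ILPRSA$, an instance of $\FCRSA$.

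With $F_S$ in hand, Theorem \ref{theo:quant_eff_fcrsa} yields, in quantum polynomial time, a valid coloring $P$ of $F_S$ satisfying $|P| \leq opt_{\FC}(F_S)^{\alpha}$. I would then instantiate the ILP variables exactly as in the forward direction of Theorem \ref{theo:ilp_reduction}, setting $w_i = \indexf(i \leq |P|)$, $x_{u,i} = \indexf(z_u \in P_i)$, and the helper variables $a_j, b_j, s_j$ accordingly. That argument already verifies that this assignment satisfies every constraint (\ref{eq:ilp1})--(\ref{eq:ilp9}) and that its objective value is $\sum_{1 \leq i \leq M} w_i = |P|$. Since $(\tau_5, g_2)$ maps minimal colorings to minimal assignments, we have $opt_{\ILP}(\ILP_{F_S}) = opt_{\FC}(F_S)$, so chaining the bounds gives $\sum_{1 \leq i \leq M} w_i = |P| \leq opt_{\FC}(F_S)^{\alpha} = opt_{\ILP}(\ILP_{F_S})^{\alpha}$, as required.

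The only genuinely new content beyond the $\FCRSA$ theorem is the recovery step, so that is where I expect the main (mild) obstacle: one must confirm that $F_S$ is unambiguously and efficiently recoverable from the constraint list of $\ILP_{F_S}$, i.e.\ that the constraint blocks emitted by $\tau_5$ are distinguishable by clause type and that their variable indices can be parsed directly. Everything else is a direct composition of results already proven, together with the optimality identity $opt_{\ILP} = opt_{\FC}$ guaranteed by the reduction.
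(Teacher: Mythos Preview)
Your proposal is correct and follows essentially the same route as the paper: recover $F_S$ from the constraint blocks (\ref{eq:ilp4})--(\ref{eq:ilp8}) of $\ILP_{F_S}$, invoke Theorem~\ref{theo:quant_eff_fcrsa} to obtain a valid coloring $P$ with $|P|\leq opt_{\FC}(F_S)^{\alpha}$, and then use the $\Longrightarrow$-direction of Theorem~\ref{theo:ilp_reduction} together with $opt_{\ILP}(\ILP_{F_S})=opt_{\FC}(F_S)$ to produce the desired assignment. If anything, your write-up is more careful about the recovery step and the chaining of the bound than the paper's own proof sketch.
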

\begin{proof}
    Given an instance $\ILP_{F_S}$, one can easily reconstruct $F_S$ from the constraints (\ref{eq:ilp4}) - (\ref{eq:ilp8}) in polynomial time. It is then possible to obtain a valid colouring $P$ of $F_S$ given the routine described in the proof for Theorem \ref{theo:quant_eff_fcrsa}, such that $|P| \leq opt_{\FC}(F_S)$. 
    With $P$, we can get a valid assignment of the variables in $\ILP_{F_S}$ using the routine described in the $\Longrightarrow$-direction in the proof of Theorem \ref{theo:ilp_reduction}. Also due to Theorem \ref{theo:ilp_reduction}, we know that this variable assignment admits the objective function of $\ILP_{F_S}$ to be less than $opt_{\ILP}(\ILP_{F_S})^{\alpha} = opt_{\FC}(F_S)^{\alpha}$.
\end{proof}
Thus, due to the classical approximation hardness from Theorem \ref{theo:classical_hardness_ilprsa}, we encounter a super-polynomial quantum 
advantage for approximating the solution of an integer linear programming problem. It is 
important to stress that the reduction is explicit: That is to say, we can construct
the instances for which one can achieve a quantum advantage of this kind.

\subsection{The optimization problem in terms of a quantum Hamiltonian} \label{subsec:hamiltonian}
The quantum algorithm presented is distinctly not of a variational type, as they are commonly proposed for approximating combinatorial optimization tasks using a quantum computer \cite{farhi_qaoa_2014}.
That said, it is 
still meaningful to formulate the problem at hand as an energy minimization 
problem, to closely connect the findings established here 
to the performance of variational quantum algorithms \cite{cerezo_variational_2021,mcclean_variational_2016}
in near-term quantum computing, as this is the context in which such problems are 
typically stated.
\ntext{It remains to be investigated to which extent the resulting instances can be practically studied and solved on near-term quantum computers.
The aim here is to provide a formal connection from formula coloring and integer linear programming problems to variational quantum algorithms, where}
the problems are commonly stated as 
\emph{unconstrained
binary optimization problems} of the form 
\begin{eqnarray}
	x^*&:=& 
	\text{argmin}_{x\in \{0,1\}^n}
	f(x),
\end{eqnarray}
where $f:\{0,1\}^n \rightarrow \mathbbm{R}$ is an appropriate 
cost function and $x^*$ is a solution bit string of $f$. Particularly common
are \emph{quadratic unconstrained
binary optimization problems},
\begin{eqnarray}
	\text{minimize } f(x) &=& x^TQx ,\\
 x^*&:=& 
	\text{argmin}_{x\in \{0,1\}^n}
	f(x),
\end{eqnarray}
where $Q=Q^T$ is a real symmetric matrix. In fact, it is a well-known result that 
all higher order polynomial
binary optimization problems can be cast into the form of such 
a quadratic unconstrained
binary optimization problem, possibly by adding further auxiliary 
variables;
but it can also be helpful to keep the higher order polynomials.
All such problems can be directly mapped to Hamiltonian problems.
Notably, for quadratic
unconstrained binary optimization problems, the
minimum is equivalent with the ground state energy of the 
\emph{quantum Ising
Hamiltonian} defined on $n$ qubits as
\begin{equation}
	H =
	\sum_{i,j=1}^n
	Q_{i,j} 
	(\mathbbm{1}- Z_i)
	(\mathbbm{1}- Z_j),
\end{equation}
where $Z_j$ is the Pauli-$Z$ operator supported on site labeled $j$. For higher order polynomial problems,
one can proceed accordingly.

Let us, pars pro toto, 
show how the formula colouring problem in the centre of this work 
can be cast into a quartic binary optimization problem.
Let $k\in \mathbbm{N}$ be an upper bound to the
number of colours used for a formula over $m$ variables, with $z_1, \dots , z_m 
\in \{1,\dots, k\}$ being the variables in the formula. We can then make use
of $n=mk$ bits (which then turn into 
$n=mk$ qubits). These bits referred to as $b_{v,c}$ feature the double labels $(v,c)$,
where $v \in \{ 1,\dots, m\}$ labels the vertices and
$ c\in \{1,\dots, k\}$ the colours. If the vertex $v$ is assigned the 
colour $c$, we set $b_{v,c}=1$, and $b_{v,d}=0$ for all $d\neq c$.
To make sure that the solution will satisfy such an
encoding requirement, one adds a penalty of the form
$(1- \sum_{c=1}^k b_{v,c})^2$.
The clauses of the form
$(z_i \neq z_j)$ are actually
precisely like in the graph
colouring problem \cite{tabi_graphcoloring_2020}. This can be incorporated
by penalty terms of the type
$\sum_{c=1}^k b_{z_i, c} b_{z_j,c}$:
Then equal colours are penalized by
energetic terms. The second type of clause requires more thought: 
Exactly if $(z_i = z_j)$ is true
and $(z_j= z_k)$ is false, there should be a Hamiltonian 
penalty. As such, this is a quadratic
Boolean constraint of the form
\begin{equation}
    \Bigl(\sum_{c=1}^k b_{z_i, c} b_{z_j,c}\Bigr)
    \Bigl(1- \sum_{d=1}^k
    b_{z_j, d} b_{z_k,d}\Bigr). 
\end{equation}
Again, this can be straightforwardly be incorporated into a 
commuting classical Hamiltonian involving only terms of the type 
$(\mathbbm{1}- Z_j)$ for suitable site labels $j$, precisely as commonly considered in 
quantum approximate optimization \cite{farhi_qaoa_2014}.
Lastly, to ensure we find a minimal colouring, we can either run 
the quantum optimization algorithm for increasing $k$ and check whether a valid
colouring has been found or one adds additional $k$ qubits $w_c$, $c\in\{1,\dots,k\}$,
which we enforce to be $1$ if colour $c$ is used and $0$ if colour $c$ is not used by
adding the energetic penalty $b_{v,c} - b_{v,c}w_c$ for all $v\in\{1,\dots,m\},c\in\{1,\dots,k\}$.
This corresponds to enforcing the inequality $b_{v,c} \leq w_c$.
We can then add the energetic penalty $\sum_{c=1}^k w_c$ to enforce the optimization algorithm
to find the minimal colouring.
For these reasons, the approximation results proven here motivate the application of quantum optimization techniques for commuting Hamiltonian optimization problems.
Note that this construction is very similar to the integer linear program we proposed in Section \ref{subsec:classhard_ilp} to reduce the formula colouring problem to ILP.

\ntext{
Since any combinatorial optimization problem of the type discussed here can be mapped to a local Hamiltonian it is apparent that the local Hamiltonian problem is NP-hard.
In fact it is even known to be QMA-complete \cite{Kempe_2006}, which is at least as hard as NP.
However, for the $\FCRSA$ instances---which give rise to a specific subclass of local Hamiltonians---it remains to be studied how well the corresponding Hamiltonians can be solved using quantum optimization algorithms in practice.
}
\vspace{\columnsep}
}

{\small
\onecolumngrid
\vspace{\columnsep}

\subsection{Modelling logical clauses as inequality constraints}\label{subsec:app}

In this section, we present some details of proofs that are made reference to 
in the main text.
To model the logical 
Boolean operator
$\lor$, such that $s := (a \lor b)$ for binary variables $s,a,b$, we require the inequality constraints
\begin{align}
    &s \geq a ,\label{eq:ilp_or1}\\
    &s \geq b ,\label{eq:ilp_or2}\\
    &s \leq a + b, \label{eq:ilp_or3}
\end{align}
which is easily seen as being
equivalent.

We are here interested in modelling logical equivalences of the form $(a = 1) \Longleftrightarrow (\hat{z}_u = \hat{z}_v)$ and $(b = 1) \Longleftrightarrow (\hat{z}_u \neq \hat{z}_v)$ for the binary variables $a,b$ and integers $\hat{z}_u,\hat{z}_v$.
For the former, we model the forward and backward implications as follows.

\underline{$(a = 1) \Longrightarrow (\hat{z}_u = \hat{z}_v)$:}
Choose a large enough constant $L$ such that $\hat{z}_u + \hat{z}_v \leq L$, then, since $\hat{z}_u,\hat{z}_v \geq 0$, the following constraints encode the implication.
\begin{align}
    \hat{z}_u &\leq \hat{z}_v + (1-a)L ,\label{eq:ilp10}\\
    \hat{z}_u &\geq \hat{z}_v - (1-a)L .\label{eq:ilp11}
\end{align}
Clearly, the constraints (\ref{eq:ilp10}), (\ref{eq:ilp11}) are satisfied for $\hat{z}_u = \hat{z}_v$ if $a=1$ and for any $\hat{z}_u, \hat{z}_v$ if $a=0$.

\underline{$(\hat{z}_u = \hat{z}_v) \Longrightarrow (a = 1)$:}
Note that this implication is equivalent to $(a \neq 1) \Longrightarrow (\hat{z}_u \neq \hat{z}_v)$, which again is equivalent to $(a \neq 1) \Longrightarrow ((\hat{z}_u > \hat{z}_v) \lor (\hat{z}_u < \hat{z}_v))$, which we will model below.
We introduce a new binary variable $q$, for which, if $a=0$ and $q=1$ then $\hat{z}_u < \hat{z}_v$ and if $a=0$ and $q=0$ then $\hat{z}_u > \hat{z}_v$. This can be modelled by the constraints
\begin{align}
    \hat{z}_u &< \hat{z}_v + (1-q+a)L ,\label{eq:ilp12}\\
    \hat{z}_u &> \hat{z}_v - (q + a)L.\label{eq:ilp13}
\end{align}
The constraints (\ref{eq:ilp12}), (\ref{eq:ilp13}) are satisfied for $\hat{z}_u \neq \hat{z}_v$ if $a=0$ and for any $\hat{z}_u, \hat{z}_v$ if $a=1$.
The variable $q$ essentially indicates if $\hat{z}_u < \hat{z}_v$ or if $\hat{z}_u > \hat{z}_v$ when $a=0$ and can be ignored after the optimization process.
In a similar manner to the constraints above, we can model $(b = 1) \Longleftrightarrow (\hat{z}_u \neq \hat{z}_v)$ as\\
\begin{align}
    \hat{z}_u &< \hat{z}_v + (2-q'-b)L, \label{eq:ilp14}\\
    \hat{z}_u &> \hat{z}_v - (1+q'-b)L ,\label{eq:ilp15}\\
    \hat{z}_u &\leq \hat{z}_v + bL ,\label{eq:ilp16}\\
    \hat{z}_u &\geq \hat{z}_v - bL,\label{eq:ilp17}
\end{align}
in terms of inequality constraints.
\vspace{\columnsep}
}


\section*{Acknowledgements}
\subsection*{Funding}
This work has been supported by the Einstein Foundation (Einstein 
Research Unit on Quantum Devices), for which this is a joint node project, and the MATH+ Cluster of Excellence. It has also received funding from the BMBF (Hybrid), the BMWK (EniQmA), the Munich Quantum Valley (K-8), the DFG (CRC 183). The authors acknowledge the financial support by the Federal Ministry of Education and Research of Germany in the programme of ``Souverän. Digital. Vernetzt.'' Joint project 6G-RIC, project identification number: 16KISK030. Finally, it has also received funding from the QuantERA (HQCC) and the ERC (DebuQC).

\subsection*{Author Contributions}

N.P.\ and V.U.\ developed the theory and carried out the calculations, with relevant contributions from all authors. They also wrote the manuscript with support from J.E.
F.W.\  verified the methods and results.
J.E.\  supervised the project. J.E.\  derived the variational Hamiltonian.
J.P.S.\  devised the main conceptual ideas.

\subsection*{Competing Interests}
All authors declare that they have no competing interests.

\subsection*{Data and Materials Availability}
All data needed to evaluate the conclusions in the paper are present in the paper and/or the Supplementary Materials.

\end{document}